\theoremstyle{plain}
\newtheorem{theorem}{Theorem}
\newtheorem{lemma}{Lemma}
\newtheorem{corollary}{Corollary}
\newtheorem{definition}{Definition}
\newcommand\rsb{\rule{0pt}{2.6ex}} 
\newcommand\rsa{\rule[-1.2ex]{0pt}{0pt}} 
\newcommand\rs{\rsb \rsa} 
\newcommand{\wg}{\text{SPARQL}^{\mbox{\tiny{W3C}}}}
\newcommand{\core}{\text{SPARQL}^{\mbox{\tiny{R}}}}
\newcommand{\dom}{\operatorname{dom}}
\newcommand{\range}{\operatorname{range}}
\newcommand{\false}{\operatorname{\emph{false}}}
\newcommand{\true}{\operatorname{\emph{true}}}
\newcommand{\error}{\operatorname{\emph{error}}}
\newcommand{\SELECT}{\operatorname{SELECT}}
\newcommand{\WHERE}{\operatorname{WHERE}}
\newcommand{\AAND}{\operatorname{AND}}
\newcommand{\UNION}{\operatorname{UNION}}
\newcommand{\OPT}{\operatorname{OPT}}
\newcommand{\OPTIONAL}{\operatorname{OPT}}
\newcommand{\FILTER}{\operatorname{FILTER}}
\newcommand{\MINUS}{\operatorname{MINUS}}
\newcommand{\DIFF}{\operatorname{DIFF}}
\newcommand{\bound}{\operatorname{bound}}
\newcommand{\EXCEPT}{\operatorname{EXCEPT}}
\def\lojoin{\hbox{\raise -.2em\hbox to-.32em{$\urcorner$} \hbox to-.08em{$\lrcorner$} $\Join$}\,}
\newcommand{\ev}[2]{\llbracket #1\,\rrbracket_{#2}}
\newcommand{\var}{\operatorname{var}}
\newcommand{\desc}{\operatorname{desc}}
\newcommand{\gp}{\operatorname{gp}}
\newcommand{\cardi}[2]{\operatorname{card}(#1,#2)} 
\newcommand{\dlog}{\text{nr-Datalog}^{\neg}}
\newcommand{\facts}{\operatorname{facts}}
\newcommand{\claudio}[1]{}
\newcommand{\cclaudio}[1]{}
\newcommand{\borrado}[1]{}
\begin{document}
\pagestyle{plain}

\title{The multiset semantics of SPARQL patterns\footnote{This is an extended and updated version of the paper accepted at the International Semantic Web Conference 2016.}}
\author[1,3]{Renzo Angles}
\author[2,3]{Claudio Gutierrez}
\affil[1]{Dept. of Computer Science, Universidad de Talca, Chile}
\affil[2]{Dept. of Computer Science, Universidad de Chile, Chile}
\affil[3]{Center for Semantic Web Research (CIWS)}
\date{}

\maketitle

\begin{abstract}
The paper determines the algebraic and logic structure of
  the multiset semantics of the core patterns of SPARQL. 
We prove that the fragment formed by 
AND, UNION, OPTIONAL, FILTER, MINUS and SELECT corresponds
precisely to both, the intuitive multiset relational algebra 
(projection, selection, natural join, arithmetic union and except),
and the multiset  non-recursive Datalog with safe negation.
\end{abstract}

\section{Introduction}

 The incorporation of multisets (also called 
``duplicates'' or ``bags'')\footnote{
  There is no agreement on terminology~(\cite{Melton}, p. 27).
  In this paper we will use the word ``multiset''.
} 
into the semantics of query languages like SQL or SPARQL 
is essentially due to practical concerns: 
 duplicate elimination is expensive and duplicates might be required for
some applications, e.g. for aggregation. Although 
this design decision in SQL may be debatable  (e.g. see~\cite{Date}), 
today multisets are an established fact  in database systems~\cite{Green09,Lamperti}.

  The theory behind these query languages is relational algebra or
equivalently, relational calculus, formalisms that for sets have a clean 
and intuitive semantics  for users, developers and theoreticians~\cite{Abiteboul-Book}.
 The same cannot be said of their extensions to multisets, whose theory 
is complex (particular containment of queries) and their practical
use not always clear for users and developers~\cite{Green09}. 
 Worst, there exist several possible ways of extending set relational 
operators to multisets and one can find them in practice.
 As illustration, let us remind the behaviour of SQL relational
operators. Consider as example the multisets $A= \{ a,a,a, b,b, d,d\}$ 
and $B = \{ a, b, b, c  \}$. Then 
$A$ {\tt UNION ALL} $B = \{ a,a,a,a, b,b,b,b, c, d,d \}$,
that is, the ``sum'' of all the elements in both multisets.
{\tt UNION DISTINCT} is classical set union: $\{ a,b,c,d \}$.
$A$ \text{\tt INTERSECT ALL} $B$ is $\{ a, b, b \}$, i.e., the common
elements in $A$ and $B$, each 
with the minimum of the multiplicities in $A$ and $B$.
Regarding negation or difference, there are at least two: 
$A$ {\tt EXCEPT ALL} $B$ is $\{ a,a, d,d \}$, i.e. the
arithmetical difference of the copies, and
$A$ {\tt EXCEPT} $B$ is $\{ d,d \}$, the elements in $A$ (with their multiplicity) after
filtering out  all elements occurring in $B$. 
The reader can imagine that the ``rules'' for  combining these operators
are not simple nor intuitive as they do not follow the rules of classical set operations. 
 
 Is there a rationale behind the possible extensions? Not easy to tell.
 Early on Dayal et al. \cite{91030} presented two conceptual
approaches to extend the set operators of union, intersection and negation, 
 corresponding to the two possible interpretations of multiple copies of a
  tuple. The first approach treats all copies of a given tuple as
  being identical or indistinguishable. The second one treats all
  copies of a tuple as being distinct, e.g., as having an underlying
  identity. 
  Each of these interpretations gives rise to a different semantics
for multisets.
The first one permits to extend the lattice algebra structure of
sets induced by the $\subseteq$-order by defining a multiset order
$\subseteq_m$ defined as $A \subseteq_m B$ if each element in $A$ is contained in $B$ and
its multiplicity in $B$ is bigger than in $A$. This order gives  
a  lattice meet (multiset intersection) defined 
as the elements $c$ present in both multisets,
and with multiplicity $\min(c_A,c_B)$,  where
$c_A,c_B$ are the number of copies of $c$ in $A$ and $B$ respectively.
This is the {\tt  INTERSECT   ALL} operator  of SQL.  
The lattice join of two multisets gives a union defined as the elements
$c$  present in both multisets with multiplicity $\max(c_A,c_B)$.
This operator is not present in SQL.  As was shown by Albert \cite{90818},
there is no natural negation to add to this lattice to get a 
Boolean algebra structure like in sets.
 The second interpretation (all copies of an element are distinct) gives a
 poor algebraic structure. The union gives in this case an arithmetic
 version, where the elements in the union of the multisets $A$ and $B$ 
are  the elements $c$ present in both multisets with $c_A + c_B$
copies. This is the  {\tt UNION ALL} operator in SQL.  Under this
interpretation, the intersection loses its meaning (always gives the
empty set)  and the difference becomes trivial  ($A - B = A$).

  In order to illustrate the difficulties of having a ``coherent'' group of
operators for multisets, let us summarize the case of SQL, that does not
have a clear rationale on this point.\footnote{
We follow the semantics of ANSI and ISO SQL:1999 Database Language Standard.
}
 We classified the operators under those that:  keep the set semantics;
preserve the lattice structure of multiset order; 
do arithmetic with multiplicities. Let $A, B$ be multisets, and for each element
$c$, let $c_A$ and $c_B$ be their respective multiplicities in $A$ and $B$.
\begin{align*}
\text{union :} &
  \begin{cases}
     set   & \text{\tt UNION DISTINCT }  \text{ (multiplicity: 1)}\\
     lattice & \text{not present in SQL(*) }  \text{ (multiplicity: $\max(c_A,c_B))$}\\
      arithmetic &    \text{{\tt UNION ALL} }    \text{ (multiplicity: $c_A +c_B$)}  
  \end{cases}
\\
\text{intersection :} &
  \begin{cases}
     set   & \text{\tt INTERSECT DISTINCT }    \text{ (multiplicity: 1)} \\
     lattice  &     \text{\tt INTERSECT ALL }    \text{ (multiplicity: $\min(c_A,c_B))$}\\
      arithmetic  &   \text{does not make sense}
  \end{cases}
\\
\text{difference :} &
  \begin{cases}
      set & \text{not present in SQL}(**)
   \text{ (multiplicity: 1)} \\
      lattice & \text{does not exists }   \\
      arithmetic&      \text{\tt EXCEPT ALL}  \text{ (multiplicity:
        $\max(0,c_A - c_B)$}) \\
      filter &  \text{\tt EXCEPT}   \text{ (multiplicity:
        if ($c_B=0$) then $c_A$ else $0$}) \\ 
  \end{cases}
\end{align*}
(*) Can be simulated as {\tt (A UNION ALL B) EXCEPT ALL (A INTERSECT ALL B)}.    \\
(**) Can be simulated as {\tt SELECT DISTINCT * FROM (A EXCEPT  B)}.

\bigskip

 At this point, a  question arises: Are there ``reasonable'', 
``well behaved'',  ``harmonic'', groups of these operations for multisets?
 The answer is positive. Albert \cite{90818} proved that lattice union and
 lattice intersection  plus a filter  difference work well in certain domains. 
On the other hand, Dayal et al.~\cite{91030}  introduced
the multiset versions for projection ($\pi_X$), selection
($\sigma_C$), join ($\bowtie$) and distinct ($\delta$) and
studied their interaction with Boolean operators.
They  showed that the lattice versions above
combine well with selection ($\sigma_{P \vee Q}(r) =  \sigma_{P}(r)
\cup  \sigma_{Q}(r)$ and similarly for intersection);  that 
the arithmetic versions combine well with projection
($\pi_X(r \uplus s) =  \pi_{X}(r) \uplus \pi_{X}(s)$).
   An important facet is the complexity introduced by the different
operators.  Libkin and Wong \cite{90191,90189} and 
Grumbach et al. \cite{90190} studied the expressive power and
complexity of the operations  of the fragment including lattice union
and intersection; arithmetic difference; and distinct.

  For our purposes here, namely the study of the semantics of 
multisets in SPARQL, none of the above fragments help. 
 It turns out that is a formalism
coming from a logical field,  the well behaved fragment
of {\em non-recursive Datalog with safe negation} (nr-Datalog$^\neg$),
the one that matches the semantics of multisets in SPARQL. More precisely,
the natural extension of the usual (set) semantics of Datalog to multisets 
developed by Mumick et al. \cite{90820}.
  In this paper we work out the relational counterpart of this
  fragment, inspired by the framework
defined by Dayal et al.~\cite{91030}, 
and come up with a {\em  Multiset Relational Algebra} (MRA)
that captures precisely the multiset semantics of
the core relational patterns of SPARQL. MRA is based on the
operators projection ($\pi$), selection ($\sigma$), 
natural join ($\bowtie$), union ($\uplus$) and filter difference ($\setminus$).
The identification of this algebra and the
proof of the correspondence with a relational core of SPARQL 
are  the main contributions of this paper.
 Not less important, as a side effect,  
this approach gives a new relational view of
SPARQL (closer to classical relational algebra and hence more
intuitive for people trained in SQL);  allows to make a clean translation to a 
logical framework (Datalog); and matches precisely the fragment of 
SQL  corresponding to it.  Table 1 shows a glimpse of
these correspondences, whose details are worked in this paper.

\begin{table}[!]
\centering
\caption{{\sc Schema of correspondences among:} 
Multiset SPARQL pattern operators; Multiset Relational Algebra operators;
Datalog rules; and SQL expressions. The operator {\tt EXCEPT} in 
SPARQL is new (although expressible), $\uplus$ is arithmetic union,
and $\setminus$ in MRA is the multiset filter difference.\smallskip}
\begin{tabular}{|l|c|l|l|}
\hline
 SPARQL & Multiset &  $\dlog$ & \hspace{.7cm} SQL \\ 
        & Relational    &     & \\ 
& algebra & & \\ \hline \hline        
        
{\tt SELECT X ...}  & $\pi_X(...)$ & $q(X) \gets
                                                       L_1,\dots,L_n$
                                                                    & {\tt SELECT  X  ...}
                                    \\ \hline
{\tt {P FILTER C }}  & $\sigma_{C}(r)$ 
                             & $L \gets L_P, C$ & {\tt FROM r WHERE  C } \\ \hline
{\tt {P1 . P2 }}   & $r_1 \bowtie r_2$ 
& $L \gets L_1, L_2$ &  {\tt r1 NATURAL JOIN r2}   \\ \hline
{\tt P1 UNION P2 }  & $r_1 \uplus r_2$  &
                                                                       $L
                                                                  \gets
                                                                    L_1$
  & {\tt r1 UNION  ALL  r2} \\
 & &  $L \gets L_2$ &
\\ \hline 
{\tt P1 EXCEPT P2}  & $r_1 \setminus
                                                     r_2$ 
& $L \gets L_1, \neg L_2$ & {\tt r1 EXCEPT  r2} \\ \hline
\end{tabular}
\label{table:equivalences}
\end{table}

\paragraph{Contributions.}
 Summarizing,  this paper advances the current
understanding of the SPARQL language by determining the
precise algebraic (Multiset Relational Algebra) and logical (nr-Datalog$^{\neg}$) 
structure of the multiset semantics
of the core pattern operators in the language.
This contribution is relevant for users, developers and theoreticians.
 For {\em users}, it gives an intuitive and classic view of the
 relational core patterns of SPARQL, allowing a good
understanding of how to use and combine the basic operators
of the SPARQL language when dealing with multisets.
For  {\em developers}, helps to perform optimization, design extensions of the
language, and understanding the semantics of multisets allowing for example
 translations from SPARQL operators to the right multiset  operators of SQL and vice versa.  
For {\em theoreticians}, introduces a clean framework (Multiset Datalog as
defined by Mumick et al. \cite{90820}) to study from a formal point of view
the multiset semantics of SPARQL patterns. 

The paper is organized as follows.
Section 2 presents the basic notions and notations used in the paper.
Section 3 identifies a classical relational algebra view of SPARQL patterns,
introducing the fragment $\core$.
Section 4 presents the equivalence between $\core$ and multiset non-recursive
Datalog with safe negation, and provides explicit transformations in both
directions.
Section 5 introduces the Multiset Relational Algebra, a simple and
intuitive fragment of  relational algebra with multiset semantics, and proves that it is
exactly equivalent to multiset non-recursive Datalog with safe
negation.
Section 6 analyzes related work and presents brief conclusions.

\section{SPARQL graph patterns}
\label{sec:patterns}

The definition of SPARQL graph patterns will be presented by using the formalism presented in \cite{10160}, but in agreement with the W3C specifications of SPARQL 1.0 \cite{10155} and SPARQL 1.1 \cite{90699}.

\paragraph{RDF graphs.}
Assume two disjoint infinite sets $I$ and $L$, called IRIs and literals respectively.\footnote{In addition to $I$ and $L$, RDF and SPARQL consider a domain of anonymous resources called blank nodes.  Their occurrence introduces  issues that are not discussed in this paper. 
 Based on the results in \cite{91033}, we avoided blank nodes assuming that their absence does not affect the results presented in this paper.} 
An \emph{RDF term} is an element in the set $T = I \cup L$.
 An \emph{RDF triple} is a tuple $(v_1,v_2,v_3) \in I \times I \times T$ where $v_1$ is the \emph{subject}, $v_2$ the
\emph{predicate} and $v_3$ the \emph{object}.  
An \emph{RDF Graph} (just graph from now on) is a set of RDF triples.  
The \emph{union} of graphs, $G_1 \cup G_2$, is the set theoretical
union of their sets of triples.  
 Additionally, assume the existence of an infinite set $V$ of variables disjoint from $T$.
We will use $\var(\alpha)$ to denote the set of variables occurring in the structure $\alpha$. 

A \emph{solution mapping} (or just \emph{mapping} from now on) is a partial function $\mu : V \to T$ where the domain of $\mu$, $\dom(\mu)$, is the subset of $V$ where $\mu$ is defined. 
The \emph{empty mapping}, denoted $\mu_0$, is the mapping satisfying that 
$\dom(\mu_0)= \emptyset$. 
Given $?X \in V$ and $c \in T$, we use $\mu(?X) = c$ to denote the solution mapping variable $?X$ to term $c$. 
Similarly, $\mu_{?X \to c}$ denotes a mapping $\mu$ satisfying that $\dom(\mu)=\{?X\}$ and $\mu(?X) = c$.
Given a finite set of variables $W \subset V$, the restriction of a mapping $\mu$ to $W$, denoted $\mu_{|W}$, is a mapping $\mu'$ satisfying that  $\dom(\mu') = \dom(\mu) \cap W$ and $\mu'(?X) = \mu(?X)$ for every $?X \in \dom(\mu) \cap W$.
 Two mappings $\mu_1, \mu_2$ are \emph{compatible}, denoted $\mu_1 \sim \mu_2$, when for all $?X \in \dom(\mu_1) \cap \dom(\mu_2)$ it satisfies that $\mu_1(?X)=\mu_2(?X)$, i.e., when $\mu_1 \cup \mu_2$ is also a mapping.
Note that two mappings with disjoint domains are always compatible, and that the empty mapping $\mu_0$ is compatible with any other mapping.

A \emph{selection formula} is defined recursively as follows: 
(i) If $?X,?Y \in V$ and $c \in I \cup L$ then $(?X = c)$, $(?X = ?Y)$ and $\bound(?X)$ are atomic selection formulas;
(ii) If $F$ and $F'$ are selection formulas then $(F \land F')$, $(F \lor F')$ and $\neg (F)$ are boolean selection formulas.
 The evaluation of a selection formula $F$ under a mapping $\mu$, denoted $\mu(F)$, is defined in a three-valued logic with values $\true$, $\false$ and $\error$. 
 We say that $\mu$ satisfies $F$ when $\mu(F) = \true$. 
 The semantics of $\mu(F)$ is defined as follows:

\begin{itemize}
\item If $F$ is $?X = c$ and $?X \in \dom(\mu)$, then $\mu(F) = \true$ when $\mu(?X) = c$ and $\mu(F) = \false$ otherwise. If $?X \notin \dom(\mu)$ then $\mu(F) = \error$.
\item If $F$ is $?X = ?Y$ and $?X,?Y \in \dom(\mu)$, then $\mu(F) = \true$ when $\mu(?X) = \mu(?Y)$ and $\mu(F) = \false$ otherwise. If either $?X \notin \dom(\mu)$ or $?Y \notin \dom(\mu)$ then $\mu(F) = \error$.
\item If $F$ is $\bound(?X)$ and $?X \in \dom(\mu)$ then $\mu(F) = \true$ else $\mu(F) = \false$.
\item If $F$ is a Boolean combination of the previous atomic cases,
then it is evaluated following a three value logic table (see \cite{10155}, 17.2).
\end{itemize}

\paragraph{Multisets.}
\label{p-multisets}
A \emph{multiset} is an unordered collection in which
each element may occur more than once.
A multiset $M$ will be represented as a set of pairs $(t,j)$, each pair denoting 
an element $t$ and the number $j$ of times
it occurs in the multiset (called multiplicity or cardinality).
When $(t,j) \in M$ we will say that $t$ $j$-belongs to $M$
(intuitively ``$t$ has $j$ copies in $M$'').
To uniformize the notation and capture the corner cases,
 we will write  $(t,*) \in M$ or simply say $t \in M$ when
there are $\geq 1$ copies of $t$ in $M$.
Similarly, when there is no occurrence of $t$ in $M$,
we will simply say  ``$t$ does not belong to $M$'', and 
abusing notation write $(t,0) \in M$, or
$(t,*) \notin M$. All of them indicate that $t$ does not occur in
$M$.

For \emph{multisets}  of solution mappings, following the notation
of SPARQL,  we will also use the symbol $\Omega$ to denote a multiset and $\cardi{\mu}{\Omega}$ to denote the cardinality of the mapping $\mu$ in the multiset $\Omega$. 
In this sense, we use $(\mu,n) \in \Omega$ to denote that $\cardi{\mu}{\Omega} = n$, or simply $\mu \in \Omega$ when $\cardi{\mu}{\Omega} > 0$.  
Similarly, $\cardi{\mu}{\Omega} = 0$ when $\mu \notin \Omega$.
The domain of a multiset $\Omega$ is defined as $\dom(\Omega) = \bigcup_{\mu \in \Omega} \dom(\mu)$.

\paragraph{SPARQL algebra.}
Let $\Omega_1,\Omega_2$ be multisets of mappings, $W$ be a set of variables and $F$ be a selection formula.
The \emph{SPARQL algebra for multisets of mappings} is composed of the
operations of projection, selection, join, union, minus, difference and left-join,  defined respectively as follows:
\begin{itemize}
\item 
$\pi_W(\Omega_1) = \{ \mu' \mid \exists \mu \in \Omega_1, \mu' = \mu_{|W} \}$\\ 
where 
$\cardi{\mu'}{\pi_W(\Omega_1)} = \sum_{\mu' = \mu_{|W}} \cardi{\mu}{\Omega_1}$

\item   
$\sigma_F(\Omega_1) = \{ \mu \in \Omega_1 \mid \mu(F) = \true \}$ \\ 
where $\cardi{\mu}{\sigma_F(\Omega_1)} = \cardi{\mu}{\Omega_1}$ 

\item 
$\Omega_1 \Join \Omega_2 = \{ \mu = (\mu_1 \cup \mu_2) \mid \mu_1 \in \Omega_1, \mu_2 \in \Omega_2, \mu_1 \sim \mu_2 \}$ \\
where
$\cardi{\mu}{\Omega_1 \Join \Omega_2} = \sum_{\mu = (\mu_1 \cup \mu_2)} \cardi{\mu_1}{\Omega_1}~\times~\cardi{\mu_2}{\Omega_2}$

\item 
$\Omega_1 \cup \Omega_2 = \{ \mu \mid \mu \in \Omega_1 \lor \mu \in \Omega_2 \}$ \\
where
$\cardi{\mu}{\Omega_1 \cup \Omega_2} = \cardi{\mu}{\Omega_1} + \cardi{\mu}{\Omega_2}$ 

\item 
$\Omega_1 - \Omega_2 = \{ \mu_1 \in \Omega_1 \mid \forall \mu_2 \in \Omega_2, \mu_1 \nsim \mu_2 \lor \dom(\mu_1) \cap \dom(\mu_2) = \emptyset \}$ \\
where
$\cardi{\mu_1}{\Omega_1 - \Omega_2} = \cardi{\mu_1}{\Omega_1}$ 

\item 
$\Omega_1 \setminus_F \Omega_2 = \{ \mu_1 \in \Omega_1 \mid \forall \mu_2 \in \Omega_2, (\mu_1 \nsim \mu_2) \lor (\mu_1 \sim \mu_2 \land (\mu_1 \cup \mu_2)(F) \neq \true ) \}$ \\
where
$\cardi{\mu_1}{\Omega_1 \setminus_F \Omega_2} = \cardi{\mu_1}{\Omega_1}$

\item
$\Omega_1 \lojoin_F \Omega_2 = \sigma_F(\Omega_1 \Join \Omega_2) \cup
(\Omega_1 \setminus_F \Omega_2)$ \\
where
$\cardi{\mu}{\Omega_1 \lojoin_F \Omega_2} =
\cardi{\mu}{\sigma_F(\Omega_1 \Join \Omega_2)} + \cardi{\mu}{\Omega_1
  \setminus_F \Omega_2}$ 

\end{itemize}

\paragraph{Syntax of graph patterns.}
A SPARQL \emph{graph pattern} is defined recursively as follows:
A triple from $(I \cup L \cup V) \times (I \cup V) \times (I \cup L \cup V)$ is a graph pattern called a \emph{triple pattern}.
\footnote{We assume that any triple pattern contains at least one variable.} 
If $P_1$ and $P_2$ are graph patterns then 
$( P_1 \AAND P_2 )$, 
$( P_1 \UNION P_2 )$, 
$( P_1 \OPTIONAL P_2 )$ and 
$( P_1 \MINUS P_2 )$ are graph patterns.
If $C$ is a filter constraint (as defined below) and
$var(C) \subseteq \dom(P_1)$, 
then $(P_1 \FILTER C)$ is a graph pattern.
And if $W \subseteq \dom(P_1)$ is a set of variables, then $(\SELECT W P_1)$ is a  graph pattern.

A \emph{filter constraint} is defined recursively as follows:
(i) If $?X,?Y \in V$ and $c \in I \cup L$ then $(?X = c)$, $(?X = ?Y)$ and $\bound(?X)$ are \emph{atomic filter constraints};
(ii) If $C_1$ and $C_2$ are filter constraints then 
$(!C_1)$, $(C_1~||~C_2)$ and $(C_1~\&\&~C_2)$ 
are \emph{complex filter constraints}.
Given a filter constraint $C$, we denote by $f(C)$ the selection formula
represented by $C$. Note that there exists a simple and direct translation from filter constraints to selection formulas and vice versa.

\paragraph{Semantics of SPARQL graph patterns.}
The evaluation of a SPARQL graph pattern $P$ over an RDF graph $G$ is defined as a function $\ev{P}{G}$ (or $\ev{P}{}$
when $G$ is clear from the context) which returns a multiset of solution mappings. 
 Let $P_1,P_2,P_3$ be graph patterns and $C$ be a filter constraint.
The evaluation of a graph pattern $P$ over a graph $G$
is defined recursively as follows:
\begin{enumerate}
\item If $P$ is a triple pattern $t$, then
$\ev{P}{G} = \{ (\mu,1) \mid \dom(\mu) = \var(t) \land \mu(t) \in G \}$
where
$\mu(t)$ is the triple obtained by replacing the variables in $t$ according to $\mu$.
\item $\ev{(P_1 \AAND P_2)}{G} = \ev{P_1}{G} \Join \ev{P_2}{G}$.    
\item If $P$ is $(P_1 \OPTIONAL P_2)$ then
 \begin{itemize} 
 \item[(a)] if $P_2$ is $(P_3 \FILTER C)$ then $\ev{P}{G} = \ev{P_1}{G} \lojoin_C \ev{P_3}{G}$
 \item[(b)] else $\ev{P}{G} = \ev{P_1}{G} \lojoin_{(\true)} \ev{P_2}{G}$.  
 \end{itemize}
\item $\ev{(P_1 \MINUS P_2)}{G} =  \ev{P_1}{G} - \ev{P_2}{G}$.  
\item $\ev{(P_1 \UNION P_2)}{G} =  \ev{P_1}{G} \cup \ev{P_2}{G}$.  
\item $\ev{(P_1 \FILTER C)}{G} = \sigma_{f(C)}( \ev{P_1}{G})$. 
\item $\ev{(\SELECT  W P_1)}{G} = \pi_W(\ev{P_1}{G})$.
\end{enumerate}

For the rest of the paper, we will call $\wg$ the
fragment of graph patterns defined as follows:
\begin{definition}[$\wg$]
\label{wg}
  $\wg$ is the fragment of SPARQL
 composed of the operators $\AAND$, $\UNION$,
$\OPT$, $\FILTER$, $\MINUS$ and $\SELECT$, as defined above.  
\end{definition}


\section{The relational fragment of SPARQL}
\label{sec:core}

In this section we will introduce a fragment of SPARQL which follows
standard intuitions of the operators from relational algebra and SQL. We will prove that this fragment is equivalent to $\wg$.
 First, let us introduce the $\DIFF$ operator as an explicit way of expressing negation-by-failure\footnote{Recall that negation-by-failure can be expressed in SPARQL 1.0 as the combination of an optional graph pattern and a filter constraint containing the bound operator.} in SPARQL. 

\begin{definition}[The $\DIFF$ operator]
The weak difference of two graph patterns, $P_1$ and $P_2$, is defined as
\[ 
\ev{(P_1 \DIFF P_2)}{} = \{ \mu_1 \in \ev{P_1}{} \mid \forall \mu_2 \in \ev{P_2}{}, \mu_1 \nsim \mu_2 \}
\]
\end{definition}  
where $\cardi{\mu_1}{\ev{(P_1 \DIFF P_2)}{}} = \cardi{\mu_1}{\ev{P_1}{}}$.

It is  important to note that the DIFF operator is not defined in
SPARQL 1.0 nor in SPARQL 1.1 at the syntax level. However, it can be
implemented in  current SPARQL engines by using the  difference
operator of the $\wg$ algebra ($\Omega_1 \lojoin_{true} \Omega_2$).
 It was showed \cite{91301,91309} that the operators $\OPTIONAL$ and
 $\MINUS$ can be simulated 
with the operator $\DIFF$ in combination with $\AAND$, $\UNION$ and $\FILTER$.

In order to facilitate, and make more natural the translation from
SPARQL to Relational Algebra (and Datalog), we will introduce a more
intuitive notion of difference between two graph patterns.
We define the domain of a pattern $P$, denoted $\dom(P)$,
as the set of variables that occur (defining the output ``schema'') in  
the multiset of solution mappings for any evaluation of $P$.

\begin{definition}[The $\EXCEPT$ operator]
Let $P_1, P_2$ be graph patterns satisfying $\dom(P_1) = \dom(P_2)$.
The except difference of $P_1$ and $P_2$ is defined as
\[ 
\ev{(P_1 \EXCEPT P_2)}{} = \{ \mu \in \ev{P_1}{} \mid \mu \notin \ev{P_2}{} \},
\]
where $\cardi{\mu}{\ev{(P_1 \EXCEPT P_2)}{}} = \cardi{\mu}{\ev{P_1}{}}$.

We will denote by $\EXCEPT^*$ (or outer EXCEPT) the version of this operation when the restriction on domains is not considered.\footnote{This operation is called SetMinus in \cite{Kaminski}.} 
\end{definition} 

 Note that the restriction on the domains of $P_1$ and $P_2$ follows
 the philosophy of classical relational algebra. But it can be proved
that $\EXCEPT$ and its outer version are simulable each other:
\begin{lemma}
\label{outer}
In $\wg$,  the operator
$\EXCEPT$ can be simulated  using  $\EXCEPT^*$ and vice versa.
\end{lemma}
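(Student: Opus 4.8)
The statement is that within $\wg$, the restricted operator $\EXCEPT$ (which requires $\dom(P_1)=\dom(P_2)$) and its unrestricted variant $\EXCEPT^*$ are mutually expressible. The plan is to give two explicit simulations, one in each direction, and then argue semantic equivalence by a direct cardinality computation using the definitions of $\EXCEPT$, $\EXCEPT^*$, and the $\wg$-algebra operators (in particular $\AAND$, $\SELECT$, and the compatibility relation $\sim$).

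First I would handle the easy direction: simulating $\EXCEPT$ by $\EXCEPT^*$. If $\dom(P_1)=\dom(P_2)$, then the domain restriction is vacuously satisfied, so $\ev{(P_1 \EXCEPT P_2)}{G}$ and $\ev{(P_1 \EXCEPT^* P_2)}{G}$ coincide by inspection of the two definitions — in both, a mapping $\mu \in \ev{P_1}{G}$ is kept (with its original cardinality) precisely when $\mu \notin \ev{P_2}{G}$, and the notion "$\mu \notin \ev{P_2}{G}$" is the same in both cases since membership $\mu \in \ev{P_2}{G}$ does not depend on any domain constraint. Hence $P_1 \EXCEPT P_2$ and $P_1 \EXCEPT^* P_2$ are literally the same pattern on inputs with equal domains, so $\EXCEPT$ is a special case of $\EXCEPT^*$.

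The substantive direction is simulating $\EXCEPT^*$ by $\EXCEPT$. Here $P_1$ and $P_2$ may have different domains, and the definition of $\EXCEPT^*$ keeps $\mu \in \ev{P_1}{G}$ iff $\mu \notin \ev{P_2}{G}$, where now "$\mu \in \ev{P_2}{G}$" means $\mu$ occurs verbatim as a mapping of $\ev{P_2}{G}$ — which can only happen if $\dom(\mu) = \dom(P_2)$, i.e. $\dom(P_1) = \dom(P_2)$ for that particular $\mu$ to ever be removed. The idea is therefore to force both arguments onto the common domain $\dom(P_1)$. I would set $W = \dom(P_1)$ and replace $P_2$ by $P_2' = \SELECT~W~(P_1 \AAND P_2)$, so that $\dom(P_2') = \dom(P_1) = \dom(P_1)$, and then claim
\[
\ev{(P_1 \EXCEPT^* P_2)}{G} = \ev{(P_1 \EXCEPT P_2')}{G}.
\]
The point of joining with $P_1$ first is twofold: it restricts attention to those $\mu_2 \in \ev{P_2}{G}$ that are compatible with some $\mu_1 \in \ev{P_1}{G}$, and after projecting to $W$ it produces exactly the mappings $\mu \in \ev{P_1}{G}$ that extend to a member of $\ev{P_2}{G}$. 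One checks: a mapping $\mu$ with $\dom(\mu)=W$ satisfies $\mu \in \ev{P_2'}{G}$ iff $\mu \in \ev{P_1}{G}$ and there is $\mu_2 \in \ev{P_2}{G}$ with $\mu \cup \mu_2$ a mapping and $(\mu\cup\mu_2)_{|W} = \mu$ — and since $\ev{P_1}{G}$ already contributes all of $\mu$ on $W$, this is equivalent to $\mu \in \ev{P_2}{G}$ in the sense used by $\EXCEPT^*$ when $\dom(\mu)$ matches $\dom(P_2)$, while it is automatically handled (the mapping is simply kept) when the domains do not match — because then no $\mu_2$ over $\dom(P_2)$ can be verbatim-equal to $\mu$. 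Cardinalities are preserved on the nose in every step, since $\EXCEPT$ and $\EXCEPT^*$ both inherit $\cardi{\mu}{\ev{P_1}{}}$, so there is nothing to track there.

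The main obstacle — really the only delicate point — is the corner-case bookkeeping around mappings $\mu$ whose domain does not equal $\dom(P_2)$: under $\EXCEPT^*$ such a $\mu$ is always retained (it can never be "verbatim" in $\ev{P_2}{G}$), and I must make sure the simulation $P_1 \EXCEPT P_2'$ also retains it, which it does precisely because $\dom(P_2') = \dom(P_1)$ equals $\dom(\mu)$ for every $\mu \in \ev{P_1}{G}$, so the verbatim-membership test is meaningful and gives the right answer. A secondary care point is confirming $\dom(P_2') = \dom(P_1)$: this needs that $\dom(P_1 \AAND P_2) = \dom(P_1) \cup \dom(P_2) \supseteq W$, so that $\SELECT~W$ is well-formed, which holds by the syntactic conditions on $\SELECT$ and the definition of $\dom$ of a join. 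Once these edge cases are dispatched, the equivalence is immediate from unfolding the definitions.
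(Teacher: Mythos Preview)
Your easy direction is fine and matches the paper. The hard direction, however, does not work: the simulation $P_2' := \SELECT\ W\ (P_1 \AAND P_2)$ removes strictly more mappings than $\EXCEPT^*$ does.

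The problem is that $\AAND$ joins on \emph{compatibility}, not on \emph{equality} of mappings. Take $\dom(P_1)=\{?x\}$, $\dom(P_2)=\{?x,?y\}$, $\ev{P_1}{G}=\{\mu_1\}$ with $\mu_1(?x)=a$, and $\ev{P_2}{G}=\{\mu_2\}$ with $\mu_2(?x)=a$, $\mu_2(?y)=b$. Then $\mu_1\neq\mu_2$ (different domains), so $\mu_1\notin\ev{P_2}{G}$ and $\ev{P_1\ \EXCEPT^*\ P_2}{G}=\{\mu_1\}$. But $\mu_1\sim\mu_2$, hence $\ev{P_1\ \AAND\ P_2}{G}=\{\mu_2\}$ and $\ev{P_2'}{G}=\pi_{\{?x\}}(\{\mu_2\})=\{\mu_1\}$, so $\ev{P_1\ \EXCEPT\ P_2'}{G}=\emptyset$. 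Your claimed equivalence fails.

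There is also a hidden assumption in your argument that ``$\dom(\mu)=\dom(P_1)$ for every $\mu\in\ev{P_1}{G}$''. In $\wg$ this is false in general (think of $\UNION$ or $\OPT$), so the case analysis you sketch around ``when the domains do not match'' does not line up with what actually happens in $P_2'$.

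The paper's proof takes a genuinely different route: it exploits the $\bound$ predicate to detect exactly which variables of a mapping are defined. Writing $\dom(P)=X\cup Y$, $\dom(Q)=X\cup Z$ with $X,Y,Z$ disjoint, a mapping of $\ev{P}{}$ can be literally equal to one of $\ev{Q}{}$ only if none of its $Y$-variables is bound. So one filters $P$ by $\mathrm{NoneBound}(Y)$, filters $Q$ by $\mathrm{NoneBound}(Z)$, projects both to $X$ (now the domains agree and $\EXCEPT$ is applicable), and finally unions back the mappings of $P$ with $\mathrm{SomeBound}(Y)$, which $\EXCEPT^*$ never removes. The key missing idea in your attempt is this use of $\bound$ to reduce compatibility to equality; a pure join/projection construction cannot distinguish ``equal'' from ``merely compatible''.
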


\begin{proof}
Clearly $\EXCEPT$ can be simulated by $\EXCEPT^*$.

On the other direction, let  $\dom(P) = X \cup Y$ and $\dom(Q) = X \cup Z$, 
where $X, Y, Z$ are disjoint set of variables. 
For a 
given set of variables $V = \{v_1 , \dots , v_n \}$, let 
NoneBound$(V )$ denotes $\neg \text{bound}(v_1) \wedge \cdots \wedge
\neg \text{bound}(v_n)$,
 and SomeBound$(V )$ denotes $\text{bound}(v_1) \vee \cdots \vee 
 \text{bound}(v_n)$. Let $P'$ be $(P \FILTER \text{NoneBound}(Y))$ 
and $Q'$ be $(Q \FILTER \text{NoneBound}(Z ))$. 
 Let $P ''$ and $Q''$ be the graph patterns 
$(\SELECT X P')$ and $(\SELECT X Q')$ respectively. 
Now, $\dom(P'') = \dom(Q'') = X$ and hence
 $(P'' \EXCEPT Q'')$ makes sense. Thus
\[
(P \EXCEPT^* Q) \equiv ((P '' \EXCEPT Q'' ) \UNION (P \FILTER
\text{SomeBound}(Y ))).
\]
\end{proof}

 The next lemma
establishes the relationship between $\EXCEPT$ and $\DIFF$,
showing that $\EXCEPT$ can be simulated in $\wg$.
 
\begin{lemma}
\label{lemma:diffex}
For every pair of graph patterns $P_1,P_2$ in $\wg$, and any RDF graph
$G$,  the operator $\EXCEPT$ can be simulated by $\DIFF$ and vice versa.
\end{lemma}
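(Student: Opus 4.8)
The plan is to establish the equivalence in both directions by unfolding the definitions of $\DIFF$ and $\EXCEPT$ and showing that each can be rewritten as the other together with auxiliary applications of $\FILTER$, $\SELECT$, and $\UNION$ that force the domain bookkeeping to match. The essential difference between the two operators is that $\DIFF$ filters out $\mu_1 \in \ev{P_1}{}$ whenever it is \emph{compatible} with some $\mu_2 \in \ev{P_2}{}$, whereas $\EXCEPT$ (which presupposes $\dom(P_1)=\dom(P_2)$) filters out $\mu_1$ only when $\mu_1$ \emph{equals} some element of $\ev{P_2}{}$. When the two domains coincide and every mapping is total on that common domain, compatibility and equality collapse to the same relation; the translations I will give are precisely the devices needed to reduce the general case to that situation.

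First I would show $\EXCEPT$ is simulable by $\DIFF$. Given $P_1, P_2$ with $\dom(P_1)=\dom(P_2)=X$, I claim $(P_1 \EXCEPT P_2) \equiv (P_1 \DIFF P_2)$ provided all mappings produced by $P_1$ and $P_2$ are total on $X$ — but in general a SPARQL pattern may produce mappings whose domain is a proper subset of $X$ (because of $\OPTIONAL$). So the clean statement is to first split $\ev{P_1}{}$ according to which variables of $X$ are bound: using $\FILTER$ with the $\text{NoneBound}$/$\text{SomeBound}$ constraints from the proof of Lemma~\ref{outer}, I can write $P_1$ as a $\UNION$ of patterns each of which is homogeneous in its bound-set, and likewise for $P_2$; on each matching pair of homogeneous fragments, compatibility of two mappings with the same domain is literally equality, so $\DIFF$ and $\EXCEPT$ agree, and mismatched fragments never interact under $\EXCEPT$ (different domains $\Rightarrow$ the mapping stays) while under $\DIFF$ a mapping with a strictly smaller domain than some $\mu_2$ could be wrongly removed — hence the need to peel off exactly the fragment whose bound-set matches. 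Assembling the fragments back with $\UNION$ (which is arithmetic, so multiplicities are preserved, matching the cardinality clause $\cardi{\mu}{\ev{(P_1\EXCEPT P_2)}{}}=\cardi{\mu}{\ev{P_1}{}}$) gives the simulation. For the converse, given arbitrary $P_1,P_2$ I would pad both to a common domain: let $\dom(P_1)=X\cup Y$, $\dom(P_2)=X\cup Z$ with $X,Y,Z$ disjoint. A mapping $\mu_1 \in \ev{P_1}{}$ is compatible with some $\mu_2\in\ev{P_2}{}$ exactly when the $X$-restrictions match on the variables both have bound; by the same $\text{NoneBound}$/$\text{SomeBound}$ decomposition plus $\SELECT_X$ I reduce $\ev{P_i}{}$ to patterns over domain $X$, apply $\EXCEPT$ there, and re-join the discarded-by-restriction columns. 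This is essentially the construction already carried out in Lemma~\ref{outer} for $\EXCEPT^*$ versus $\EXCEPT$, reused to bridge the compatibility-vs-equality gap.

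I would actually phrase the final argument as a two-step reduction to avoid redoing the case analysis: show $\DIFF \rightsquigarrow \EXCEPT^*$ and $\EXCEPT^* \rightsquigarrow \DIFF$ directly (these are the versions with no domain restriction, so the only issue is compatibility versus equality, handled purely by the bound-set $\UNION$-decomposition), and then invoke Lemma~\ref{outer} to pass between $\EXCEPT^*$ and $\EXCEPT$. Throughout, each step must be checked to preserve multiplicities: $\FILTER$ and $\SELECT$-onto-the-full-domain and $\EXCEPT$/$\DIFF$ all keep the multiplicity of a surviving $\mu_1$ equal to $\cardi{\mu_1}{\ev{P_1}{}}$, and the $\UNION$ used to reassemble homogeneous fragments is over a partition of $\ev{P_1}{}$ by bound-set, so the arithmetic sum recovers exactly $\cardi{\mu_1}{\ev{P_1}{}}$ — no spurious duplicates arise.

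The main obstacle is the bound-set decomposition and the verification that, after splitting $P_1$ and $P_2$ into bound-homogeneous fragments, the cross-terms behave correctly: one must be careful that a mapping $\mu_1$ in one fragment of $P_1$ is never removed by a fragment of $P_2$ over a \emph{different} bound-set under the $\DIFF$ semantics, since $\DIFF$ tests compatibility and a mapping with a small domain is compatible with many mappings of larger domain. Getting the $\text{NoneBound}$ constraints placed so that this can never happen — i.e. ensuring each fragment of $P_1$ is compared only against the unique fragment of $P_2$ with the same set of bound variables — is the delicate point; once that alignment is pinned down, the rest is a routine unwinding of definitions and multiplicity bookkeeping.
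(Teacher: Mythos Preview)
Your overall strategy—reduce via Lemma~\ref{outer} to comparing $\DIFF$ with $\EXCEPT^*$, and then bridge the gap between ``compatible'' and ``equal'' by partitioning both sides according to the set of bound variables—is a genuinely different device from the paper's. The paper instead introduces an operator that replaces every unbound entry by a single fresh constant $c$ (and the inverse operation that turns $c$ back into ``unbound''); after padding, every mapping is total on the common schema, so $\sim$ and $=$ coincide outright and no case split over bound-sets is needed. Your partition argument does handle the direction $\EXCEPT\to\DIFF$ cleanly: on each fragment $P_i^{B}$ all mappings have domain exactly $B$, so $\DIFF$ restricted to $(P_1^{B},P_2^{B})$ tests equality, and since the fragments of $\ev{P_1}{}$ are disjoint the arithmetic $\UNION$ reassembles multiplicities correctly.

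The gap is in the converse direction $\DIFF\to\EXCEPT$. Matching bound-set fragments is \emph{not} sufficient here, because $\DIFF$ removes $\mu_1$ whenever it is compatible with \emph{some} $\mu_2$, and compatibility can hold across different bound-sets. Concretely, take $\dom(P_1)=\dom(P_2)=\{x,y\}$, $\ev{P_1}{}=\{\mu_1\}$ with $\mu_1=\{x\mapsto a\}$, and $\ev{P_2}{}=\{\mu_2\}$ with $\mu_2=\{x\mapsto a,\,y\mapsto b\}$. Then $\mu_1\sim\mu_2$, so $\ev{P_1\DIFF P_2}{}=\emptyset$; but in your scheme $\mu_1$ lives in the $\{x\}$-fragment of $P_1$ and is compared only against the $\{x\}$-fragment of $P_2$, which is empty, so $\mu_1$ survives. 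Your phrase ``reduce $\ev{P_i}{}$ to patterns over domain $X$, apply $\EXCEPT$ there, and re-join'' does not address this cross-fragment interaction, and the ``delicate point'' you flag in the last paragraph is posed only for the $\EXCEPT\to\DIFF$ direction, where restricting the right-hand side cures it; in the $\DIFF\to\EXCEPT$ direction the same restriction is exactly what makes the simulation wrong. What is missing is a way to \emph{materialise} the compatibility witnesses before applying $\EXCEPT$: for each bound-set $B$ one needs something like $P_1^{B}\ \EXCEPT^*\ \big(\SELECT\,B\ (P_1^{B}\ \AAND\ P_2)\big)$, so that the join with $P_2$ produces precisely those $\mu_1$ that have a compatible partner, after which $\EXCEPT^*$ (hence $\EXCEPT$, via Lemma~\ref{outer}) removes them. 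The paper's fresh-constant trick is an alternative way to collapse $\sim$ to $=$ globally; either route requires more than the plain fragment-by-fragment comparison you sketch.
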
 

\begin{proof}
The high level proof goes as follows. 
As we saw before, $\EXCEPT$ and
$\EXCEPT^*$ are mutually simulable in $\wg$. And $\EXCEPT^*$ differs from DIFF only in checking compatibility of mappings (i.e. $\sim$).
$\ev{P_1 \EXCEPT^* P_2}{}$ eliminates from
$\ev{P_1}{}$ those mappings in $\ev{P_2}{}$ that are equal to one in 
$\ev{P_1}{}$;
while $\DIFF$ eliminates those that are compatible with one in $\ev{P_1}{}$. 
 That is, the difference is between the multisets
$\{ (\mu_1,n_1)\in \Omega_1 \mid \neg\exists \mu_2 \in \Omega_2 ~\land~ \mu_1 = \mu_2   \}$ versus
$\{ (\mu_1,n_1) \in \Omega_1 \mid \neg\exists  \mu_2 \in \Omega_2 ~\land~ \mu_1 \sim \mu_2 \}$.
Now, for two mappings $\mu_1,\mu_2$, equality and compatibility 
($\mu_1=\mu_2$ versus $\mu_1 \sim \mu_2$) differ only in those
variables that are bound in $\mu_1$ and unbound in $\mu_2$ or vice versa.
Thus, to simulate $=$ with $\sim$ and vice versa, 
it is enough to have an operator that replaces all unbound 
entries in mappings of  $\Omega_1$ and
$\Omega_2$ by a fresh new constant, e.g. $c$, call the new sets 
$\Omega_1'$ and $\Omega_2'$, and we will have that 
$\{ (\mu_1,n_1) \in \Omega_1 \mid \neg\exists \mu_2 \in \Omega_2 ~\land~ \mu_1 \sim \mu_2 \}$ is equivalent to 
$\{ (\mu_1,n_1) \in \Omega_1' \mid \neg\exists  \mu_2 \in \Omega_2' ~\land~ \mu_1 = \mu_2 \}$. 
Note that cardinalities are preserved because the change between
 ``unbound'' and ``c'' does not change them.
The rest is to express the  two operations on 
multisets of solution mappings:  the one that fills in unbound entries with a fresh
constant $c$; 
and the one that changes back the values $c$ to unbound.
\end{proof}

 With the new operator $\EXCEPT$ we define the following relational
fragment of SPARQL:
\begin{definition}
 Define $\core$ as the fragment of $\wg$ graph pattern expressions
 defined recursively by triple patterns plus the operators $\AAND$,
 $\UNION$,  $\EXCEPT$, $\FILTER$ and $\SELECT$.
\end{definition}

Now we are ready to state the main theorem.
Considering that $\DIFF$ is able to express $\OPT$ and $\MINUS$ (cf. \cite{91301,91309}), and that the $\DIFF$ operator is expressible in
$\core$ (Lemma \ref{lemma:diffex}), we have the following result:

\begin{theorem}
\label{theo:core=wg}
$\core$ is equivalent to $\wg$.
\end{theorem}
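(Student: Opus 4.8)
The plan is to read the statement ``$\core \equiv \wg$'' as the assertion that the two fragments define exactly the same class of functions $G \mapsto \ev{P}{G}$ from RDF graphs to multisets of mappings, and to prove the two inclusions separately, each by structural induction on the pattern. The workhorse of both inductions is the compositionality of $\ev{\cdot}{G}$: because the evaluation of a compound pattern is defined solely in terms of the evaluations of its immediate subpatterns, pattern equivalence is a congruence, i.e. $P_1 \equiv P_1'$ and $P_2 \equiv P_2'$ imply $\mathrm{op}(P_1,P_2) \equiv \mathrm{op}(P_1',P_2')$ for every binary constructor $\mathrm{op}$, and likewise for the unary ones $\FILTER$ and $\SELECT$, provided the syntactic side conditions stay satisfied. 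Consequently, in each inductive step it is enough to replace the immediate subpatterns by equivalent patterns of the target fragment and then eliminate the outermost operator if it is not already a constructor of that fragment.

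For the inclusion $\core \preceq \wg$, induct on the structure of a $\core$ pattern $P$. If $P$ is a triple pattern, or has the form $P_1 \AAND P_2$, $P_1 \UNION P_2$, $P_1 \FILTER C$, or $\SELECT W\, P_1$, then the outermost operator is already a constructor of $\wg$, and the congruence property applied to the subpatterns (translated by the induction hypothesis) finishes the step. The only remaining case is $P = (P_1 \EXCEPT P_2)$: by the induction hypothesis fix equivalent $\wg$ patterns $P_1', P_2'$, apply the direction of Lemma~\ref{lemma:diffex} that rewrites $\EXCEPT$ using $\DIFF$ together with $\AAND$, $\UNION$ and $\FILTER$, and finally recall, as noted in the discussion preceding Lemma~\ref{lemma:diffex}, that $\DIFF$ is itself expressible in $\wg$, since $\ev{P_1 \DIFF P_2}{} = \ev{P_1}{} \setminus_{\true} \ev{P_2}{}$ and $\setminus_{\true}$ is obtainable from the left-join construction underlying $\OPT$. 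The composite of these rewrites is a $\wg$ pattern equivalent to $P$.

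For the converse inclusion $\wg \preceq \core$, induct on the structure of a $\wg$ pattern $P$. Triple patterns and patterns whose outermost operator is $\AAND$, $\UNION$, $\FILTER$ or $\SELECT$ are handled exactly as above, since these are constructors of $\core$. For $P = (P_1 \OPT P_2)$ or $P = (P_1 \MINUS P_2)$, use the induction hypothesis to translate $P_1, P_2$ into $\core$, then invoke the known simulation of $\OPT$ and $\MINUS$ by $\DIFF$ in combination with $\AAND$, $\UNION$ and $\FILTER$ (\cite{91301,91309}), and finally eliminate $\DIFF$ in favour of $\EXCEPT$ via the other direction of Lemma~\ref{lemma:diffex}. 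Every operator occurring in the result then belongs to $\core$, so $P$ has an equivalent $\core$ pattern.

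The main obstacle is not any single algebraic identity but the bookkeeping that keeps the two inductions sound. Three points require care. First, \emph{side conditions}: $P_1 \EXCEPT P_2$ requires $\dom(P_1) = \dom(P_2)$ and $P_1 \FILTER C$ requires $\var(C) \subseteq \dom(P_1)$, so every rewrite used in the induction must preserve the output schema $\dom(\cdot)$ and keep filter constraints in scope; this is precisely why Lemma~\ref{outer} (trading $\EXCEPT$ for $\EXCEPT^*$ and back) is available and why the simulations behind Lemma~\ref{lemma:diffex} are phrased so as not to disturb domains. Second, \emph{freshness}: the constructions of Lemmas~\ref{outer} and~\ref{lemma:diffex} introduce auxiliary material (the $\mathrm{NoneBound}/\mathrm{SomeBound}$ variables, the fresh constant $c$ used to fill in unbound entries and then remove it), and when these rewrites are nested one must choose the fresh symbols disjoint from everything already in use, so that inner and outer applications do not interfere. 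Third, one must make sure the cited simulation of $\OPT$ and $\MINUS$ by $\DIFF$ genuinely stays inside $\AAND$, $\UNION$, $\FILTER$ and does not silently reintroduce $\OPT$ or $\MINUS$, so that the recursion really terminates inside $\core$. Once these are discharged, the theorem is just the assembly of Lemmas~\ref{outer} and~\ref{lemma:diffex} with the cited $\OPT/\MINUS$-via-$\DIFF$ simulations, glued together by structural induction.
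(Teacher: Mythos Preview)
Your proposal is correct and follows essentially the same route as the paper. The paper's proof is the one-line remark immediately preceding the theorem: $\OPT$ and $\MINUS$ reduce to $\DIFF$ (plus $\AAND$, $\UNION$, $\FILTER$) by the cited results, and $\DIFF$ and $\EXCEPT$ are inter-simulable by Lemma~\ref{lemma:diffex}; you have simply unpacked this into two explicit structural inductions and spelled out the congruence, side-condition, and freshness bookkeeping that the paper leaves implicit.
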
 

For the rest of the paper, we will concentrate our interest on $\core$.  

\borrado{
\begin{note}
An alternative proof of Theorem \ref{theo:core=wg} is given as follows.
(Compare~\cite{91309}, Lemma 12).
Let $\theta$ be a function that renames variables by fresh ones.

$\wg$ contains $\core$: 
The graph pattern $(P_1 \EXCEPT P_2)$ can be rewritten into an equivalent pattern 
$(((P_1 \OPT (\theta P_2)) \FILTER C) \FILTER C')$ 
where
$\dom(P_1) = \{ ?x_1, \dots, ?x_n \}$,
$C$ is $( ?x_1 = \theta ?x_1 ~ \&\& \dots \&\& ~ ?x_n = \theta ?x_n)$
and
$C'$ is $(!\bound(\theta ?x_1))$.

$\core$ contains $\wg$: 
The graph pattern $(P_1 \DIFF P_2)$ can be rewritten into an equivalent graph pattern

\hspace{0.5cm} $(P_1 \EXCEPT (\SELECT W ((P_1 \AAND P_1') \FILTER C) \AAND P_2'))$ \\
where 
$W = \dom(P_1) = \{ ?x_1, \dots, ?x_n \}$, 
$P_1' = \theta(P_1)$, 
$P_2' = \theta(P_2)$ 
and 
$C$ is $( ?x_1 = \theta ?x_1 ~ \&\& \dots ~ ?x_n = \theta ?x_n)$.
\end{note}
}


\section{$\core$ $\equiv$ Multiset Datalog}
\label{sec:core=datalog}
 In this section we prove that $\core$ have the same expressive power
of Multiset Datalog. Although the ideas of the proof are similar to
those in \cite{90006} (now for $\core$), we will sketch the main transformations
to make the paper as self contained as possible. For notions of
Datalog see  Levene and Loizou \cite{70034},
for the semantics of Multiset Datalog, Mumick et al. \cite{90820}.

\subsection{Multiset Datalog}
\label{sec:datalog}
A \emph{term} is either a variable or a constant.
A positive literal $L$ is either a \emph{predicate formula}
$p(t_1,\dots, t_n)$ where $p$ is a predicate name and $t_1,\dots, t_n$ are terms, or an \emph{equality formula} $t_1 = t_2$ where $t_1$ and $t_2$ are terms. A negative literal $\neg L$ is the negation
of a positive literal $L$.
A \emph{rule} is an expression of the form
$L \gets L_1 \land \dots \land L_k \land \neg L_{k+1} \land \dots \land \neg L_n$ 
where $L$ is a positive literal called the \emph{head} of the rule and the rest of literals (positive and negative) are called the \emph{body}.
A \emph{fact} is a rule with empty body and no variables.
A \emph{Datalog program} $\Pi$ is a finite set of rules and its set of facts is denoted $\facts(\Pi)$.

A variable $x$ is \emph{safe} in a rule $r$ if it occurs in a positive
predicate or in $x = c$ ($c$ constant) or in 
 $x = y$ where $y$ is safe. A rule is safe it all its variables are safe.
A program is \emph{safe} if all its rules are safe.
A program is non-recursive if its dependency graph is acyclic.
In what follows, we only consider non-recursive and safe Datalog programs,
denoted by nr-Datalog$^\neg$.

To incorporate multisets to 
the classical Datalog framework we will follow the approach introduced 
 by Mumick and Shmueli~\cite{50675}. 
 The idea is rather intuitive:
Each  derivation tree gives rise to a substitution $\theta$.
 In the standard (set) semantics, what  matters is the set
of the different substitutions that instantiates the distinguished literal. 
On the contrary, in multiset semantics the number of such instantiations 
also becomes relevant. As Mumick and Shmueli state~\cite{50675,90820},
``duplicate semantics of a program is obtained by counting the
number of derivation trees''.
  Thus now we have pairs $(\theta,n)$ of substitutions $\theta$
plus the number $n$ of derivation trees that produce $\theta$.

A Datalog query is a pair $(\Pi, L)$ where $\Pi$ is a program and $L$ is a
distinguished predicate (the goal) occurring as the head of a rule. 
The answer to $(\Pi,L)$ is the multiset of substitutions $\theta$ such
that makes $\theta(L)$ true.

\paragraph{Normalized Datalog.}
It is possible to have each  non-recursive Datalog with safe negation
 program written in a normalized form as the following lemma shows:

\begin{lemma}
\label{normalized}
  Each  Datalog program $P$ is equivalent to a program $P'$
that only uses  the following types of rules, where
$L, L_1, L_2$ be predicate formulas, and $EQ$ is a set of equality and
inequality formulas:
\begin{itemize}
\item (Projection rule) $L \leftarrow L_1$ where $\var(L) \subset \var(L_1)$;
\item (Selection rule) $L \leftarrow L_1, EQ$, where $\var(L) =
  \var(L_1) \cup \var(EQ)$ and the rule is safe;
\item (Join rule) $L \leftarrow L_1, L_2$,
 where $\var(L) = \var(L_1) \cup \var(L_2)$; and 
\item (Negation rule) $L \leftarrow L_1, \neg L_2$ where 
$\var(L_2)  \subseteq  \var(L_1)$ and $\var(L) = \var(L_1)$. 
\end{itemize}
\end{lemma}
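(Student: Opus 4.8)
The plan is to prove the normalization lemma by a sequence of local rewriting steps, each of which preserves the multiset semantics (i.e. preserves the number of derivation trees for every ground instance of every predicate), while driving every rule toward one of the four allowed shapes. First I would eliminate structural clutter: using the non-recursivity and safety assumptions, I may assume every rule body is a conjunction of predicate literals, equality/inequality literals, and negated predicate literals, and that constants appearing inside an argument position of a predicate can be replaced by a fresh variable together with an equality to that constant (pushing constants into the $EQ$ part). Similarly, repeated variables within a single literal can be split off the same way. The key bookkeeping point throughout is that introducing a fresh variable $y$ with an equation $y=t$ does not change the count of derivation trees, since for each derivation the value of $y$ is forced.

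Next I would handle arity/shape of the body one literal at a time. A rule with a long body $L \gets M_1, \dots, M_k, EQ, \neg N_1, \dots, \neg N_m$ is decomposed by repeatedly introducing auxiliary intermediate predicates: replace $M_1, M_2$ by a fresh predicate $R(\vec z)$ whose arguments $\vec z$ are exactly $\var(M_1)\cup\var(M_2)$, add a join rule $R(\vec z) \gets M_1, M_2$, and continue. This is the standard trick that turns an $n$-ary body into a cascade of binary joins; in the multiset setting one checks that the number of derivation trees of the head is unchanged because a derivation tree for the new program is in bijection with one for the old (the auxiliary node simply records the pairing of the two subtrees). Iterating, every positive part becomes a single literal; then a single selection rule absorbs $EQ$ (taking care that safety is maintained — this is where I would invoke the safety hypothesis to guarantee $\var(L)=\var(L_1)\cup\var(EQ)$ with all variables safe); then each negated literal $\neg N_i$ is peeled off one at a time via a negation rule $L' \gets L'', \neg N_i$, after first projecting/renaming so that $\var(N_i)\subseteq \var(L'')$, which is exactly the safety condition for negation. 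Finally, if the head contains fewer variables than the surviving body literal, a single projection rule at the top achieves $\var(L)\subset\var(L_1)$.

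The main obstacle I expect is the interaction between projection rules and duplicate counting. Unlike the set case, one cannot freely merge a projection into an adjacent join or selection without possibly changing multiplicities, so the decomposition must be organized so that projections occur only as a final, separate step and never get entangled with the auxiliary-predicate introductions — each auxiliary predicate must carry \emph{all} variables of the subgoals it summarizes, not a projected subset. A secondary subtlety is negated literals that share no variables with the rest of the body, or equalities among variables that are only bound by negative literals: here safety is what rules out the pathological cases, and I would explicitly note that the hypothesis ``the program is safe'' is used to ensure every negation rule and selection rule produced meets its stated variable condition. Ground facts need no rewriting and are carried over verbatim, and non-recursivity is preserved since all newly introduced predicates depend only on predicates strictly below the original rule's head in the (still acyclic) dependency graph. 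Assembling these steps gives the program $P'$ of the required form, equivalent to $P$ under the multiset (derivation-tree-counting) semantics.
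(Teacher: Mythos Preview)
Your proposal is correct and follows essentially the same decomposition as the paper: binarize the positive body via fresh auxiliary predicates carrying all variables, absorb the (in)equalities into a single selection rule, peel off the negated literals one by one using safety to guarantee $\var(N_i)\subseteq\var(L'')$, and finish with a projection. The only noteworthy difference is cosmetic: the paper first \emph{eliminates} the equalities in $EQ$ by substitution (so the selection rule it produces carries only inequalities), whereas you keep all of $EQ$ inside the selection rule, which the lemma statement permits; you are also more explicit than the paper about the derivation-tree bijections that justify multiplicity preservation and about why auxiliary predicates must not project.
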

 
\begin{proof}
The idea of the proof (i.e. it shows expressive equivalence, but
has no efficiency considerations) is as follows.
The general rule of $P$ has the form:
\begin{equation}
\label{rule}
L \leftarrow L_1, \dots, L_n, \neg H_1,\dots, \neg H_m, EQ,  
 \end{equation}
where $L_i$ and $H_j$ are predicate formulas, and $EQ$ is a set 
of equality and inequality formulas, and the rule is safe.

First we may assume that $n=1$, i.e. that there is only one positive
literal $L_1$.
For this,  note that $L_{13}\leftarrow L_1, L_2, L_3$, where 
$L_{13}$ is a fresh predicate formula whose variables are exactly
those in $L_1, L_2, L_3$,   can be rewritten in two rules
$L_{13} \leftarrow L_{12}, L_3$  and $L_{12} \leftarrow L_1, L_2$,
with the intended meanings.
Then proceed recursively with the other positive literals and we 
can assume that there is a unique literal $L_{1n}$ defined by
the above rules.
  
Second, we may assume that $EQ$ contains only inequalities, as
we can get rid of the equalities as follows: 
(a) for $x = y$ both variables, replace everywhere in the rule
 $x$ by $y$; (b) for $x = c$, where $c$ is constant, replace
everywhere in the rule $x$ by $c$; if $c = c$, just eliminate it;
and if $a \neq b$ different constants, then replace the whole
rule by $L \leftarrow a \neq b$. 
  Let us denote $EQ'$ the remaining set of inequalities. 

Now, because   all variables in $EQ'$ must be in $L_{1n}$ (because the rule
 (\ref{rule}) is safe), the rule $L' \leftarrow L_{1n}, EQ'$,
where $L'$ is a fresh predicate formula containing all the variables 
in the body, is well  defined and safe.

At this point we have reduced the rule (\ref{rule}) to
 $L \leftarrow L', \neg H_1,\dots, \neg H_m$.
 Now define recursively 
$L'' \leftarrow L', \neg H_2$  were $L''$ 
is a fresh predicate formula containing all the variables in the body, 
and so on, until we get the rule $L \leftarrow L^{(m-1)}, H_m$.

The desired program $P'$ is the set of all the rules so defined.
\end{proof}

\subsection{From SPARQL to Datalog}
\label{sec:sparql2datalog}
The algorithm that transforms  SPARQL into Datalog
includes transformations of RDF graphs to Datalog facts,
SPARQL  queries  into a Datalog queries, and
SPARQL mappings into Datalog substitutions.

\paragraph{RDF graphs to Datalog facts.} 
Let $G$ be an RDF graph:
each term $t$ in $G$ is encoded by a fact $iri(t)$ or $literal(t)$ when $t$ is an IRI
or a literal respectively;
the set of terms in $G$ is defined by the rules $term(X) \gets iri(X)$ and $term(X) \gets literal(X)$;
the fact $Null(null)$ encodes the \emph{null} value  (unbounded value);
each RDF triple $(v_1,v_2,v_3)$ in  $G$ is encoded by a fact $triple(v_1,v_2,v_3)$.
Recall that we are assuming that an RDF graph is a ``set'' of triples.

\paragraph{SPARQL patterns into Datalog rules:}
The transformation follows essentially the idea presented by
Polleres~\cite{10150}. Let $P$ be a graph pattern and $G$ an RDF
graph. Denote by $\delta(P)_G$ 
the function which transforms $P$ into a set of Datalog rules.
Table~\ref{table:pattern2rules} shows the transformation rules defined
by the function $\delta(P)_G$, where the notion of compatible mappings is implemented by the rules:

$comp(X,X,X) \gets term(X)$, $comp(X,Y,X) \gets term(X) \land Null(Y)$,

$comp(Y,X,X) \gets Null(Y) \land term(X)$, $comp(X,X,X) \gets Null(X)$.

Also, an atomic filter condition $C$ is encoded by a literal $L$ as follows
(where $?X,?Y \in V$ and $u \in I \cup L$):
if $C$ is either $(?X = u)$ or $(?X=?Y)$ then $L$ is $C$;
if $C$ is $\bound(?X)$ then $L$ is $\neg Null(?X)$.

\paragraph{SPARQL mappings to Datalog substitutions:}
Let $P$ be a graph pattern, $G$ an RDF graph and $\mu$ a solution
mapping of $P$ in $G$. 
Then $\mu$ gets transformed into 
a substitution $\theta$  satisfying that
for each $x \in \var(P)$ there exists $ x / t \in \theta$ 
such that $t = \mu(x)$ when $\mu(x)$ is bounded and $t = null$ otherwise.

Now, the correspondence between the multiplicities 
of mappings and substitutions works as follows:
Each SPARQL mapping comes from an evaluation tree. 
A {\em set} of evaluation trees becomes a {\em multiset} of mappings.
Similarly, a {\em set} of Datalog derivation trees becomes 
a {\em multiset} of substitutions.
Thus, each occurrence of a mapping $\mu$ comes from a SPARQL evaluation tree. 
This tree is translated by Table~\ref{table:pattern2rules} to a 
Datalog derivation tree, giving rise to an occurrence of a substitution
in Datalog.
Each recursive step in Table~\ref{table:pattern2rules} carries out 
bottom up the correspondence between cardinalities of mappings and substitutions.

Note that in Table~\ref{table:pattern2rules}, the translation for
filters only consider conditions $C$ atomic. This is justified by the 
following Lemma:

\begin{lemma}
\label{conditionC}
Given a pattern $P$, for each Boolean formula $C$ in $\core$, where
$C$ is a general Boolean condition (Boolean combination of atomic
terms), there is pattern $P'$ in $\core$ that uses only atomic filters
(i.e. conditions of the form $(t_1=t_2)$ or $(t_1 \neq t_2)$, where $t_1,t_2$ are
variables or constants),  such that for each $G$,
$\ev{P \FILTER (C)}{G} = \ev{P'}{G}$.
\end{lemma}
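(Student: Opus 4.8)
The plan is to eliminate Boolean connectives from the filter condition $C$ by pushing them outward into the pattern algebra, using the standard correspondences between logical connectives and the relational operators already available in $\core$. Concretely, I would proceed by structural induction on the formula $C$. For the base case, $C$ is already atomic (of the form $(t_1 = t_2)$, $(t_1 \neq t_2)$, or $\bound(?X)$), so $P' = (P \FILTER C)$ works — except I must also reduce $\bound(?X)$, which I handle separately by noting that $\bound(?X)$ holds on exactly those mappings of $\ev{P}{}$ in whose domain $?X$ lies; when $?X \in \dom(P)$ this is always true (so the filter is vacuous and $P' = P$), and when $?X \notin \dom(P)$ it is always false (so $P' = (P \EXCEPT^* P)$, or any pattern with empty evaluation). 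The only genuinely atomic conditions left are then equalities and inequalities.

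For the inductive step I would use three rewrites. \emph{Conjunction:} $\ev{P \FILTER (C_1 \;\&\&\; C_2)}{} = \ev{(P \FILTER C_1) \FILTER C_2}{}$, which follows directly from $\sigma_{f(C_1) \land f(C_2)}(\Omega) = \sigma_{f(C_2)}(\sigma_{f(C_1)}(\Omega))$ in the SPARQL algebra; apply the induction hypothesis twice. \emph{Disjunction:} $\ev{P \FILTER (C_1 \;||\; C_2)}{} = \ev{(P \FILTER C_1) \UNION (P \FILTER C_2)}{}$ would be the naive attempt, but this is \emph{wrong} for multisets — it doubles the multiplicity of any mapping satisfying both disjuncts. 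The correct rewrite is $\ev{P \FILTER (C_1 \;||\; C_2)}{} = \ev{(P \FILTER C_1) \UNION ((P \FILTER C_2) \EXCEPT^* (P \FILTER C_1))}{}$, so that a mapping satisfying $C_2$ is counted only if it does not already satisfy $C_1$; one then checks at the cardinality level that this yields exactly $\cardi{\mu}{\ev{P}{}}$ whenever $\mu$ satisfies $C_1 \lor C_2$ and $0$ otherwise, which matches $\sigma$. \emph{Negation:} here I must be careful about the three-valued logic — $\sigma_{\neg F}$ keeps the mappings with $\mu(F) = \false$, which is \emph{not} the complement of $\{\mu : \mu(F) = \true\}$ because of $\error$ values. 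I would therefore first drive negations inward to the atoms (De Morgan for $\&\&$ and $||$, double-negation elimination), reducing to the case of a negated atom; a negated equality $\neg(t_1 = t_2)$ becomes the atomic inequality $(t_1 \neq t_2)$ and vice versa, but one must record that this equivalence holds on $\ev{P}{}$ only because $\var(C) \subseteq \dom(P)$ is required for a well-formed $\core$ filter, so no $\error$ arises on the relevant atoms; $\neg\bound(?X)$ is handled as in the base case with the roles of ``always true'' and ``always false'' swapped.

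The main obstacle, and the step deserving the most care, is the interaction of the three-valued semantics with the Boolean rewrites: I must verify that on $\ev{P}{}$, where the well-formedness condition $\var(C) \subseteq \dom(P)$ guarantees every variable of $C$ is bound in every mapping of $\ev{P}{}$, the atomic equality and inequality conditions never evaluate to $\error$, so that the connectives $\&\&$, $||$, $!$ behave classically and the De Morgan / distribution identities above are valid. Once that observation is in place, the inductive rewrites are routine, and the resulting pattern $P'$ — built from $P$ using only $\FILTER$ with atomic conditions, $\UNION$, and $\EXCEPT^*$ (hence by Lemma \ref{outer} expressible with $\EXCEPT$) — lies in $\core$. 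I would also note that each rewrite strictly decreases the connective-count of the filter formula, so the induction is well-founded and terminates.
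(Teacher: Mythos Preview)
Your approach differs from the paper's. The paper first puts $C$ into conjunctive normal form, splits the outer conjunction into nested $\FILTER$s, and then handles each disjunctive clause $d_1 \vee \cdots \vee d_k$ by rewriting it (for $k=2$) as a $\UNION$ over the \emph{mutually exclusive} conjunctions $(d_1 \wedge \neg d_2)$, $(\neg d_1 \wedge d_2)$, $(d_1 \wedge d_2)$, so that each mapping lands in exactly one branch and multiplicities add correctly. You instead do a direct structural induction and handle disjunction by the inclusion--exclusion pattern $(P \FILTER C_1) \UNION ((P \FILTER C_2) \EXCEPT^* (P \FILTER C_1))$. Both ideas solve the multiplicity bookkeeping; your rewrite is arguably cleaner and, interestingly, remains correct under the three-valued semantics even without your ``no errors'' assumption (check the nine truth-value combinations for $C_1,C_2$), whereas the paper's partition identity can fail when one disjunct evaluates to $\error$.

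There is, however, a genuine gap in your argument. Your key claim that ``$\var(C) \subseteq \dom(P)$ guarantees every variable of $C$ is bound in every mapping of $\ev{P}{}$'' is false in $\core$: by definition $\dom(P) = \bigcup_{\mu \in \ev{P}{}} \dom(\mu)$, and because $\core$ contains $\UNION$, different mappings in $\ev{P}{}$ may have different domains. Take $P = (t_1 \UNION t_2)$ with $?X \in \var(t_1) \setminus \var(t_2)$: then $?X \in \dom(P)$, yet $?X$ is unbound in every mapping coming from $t_2$. This breaks your treatment of $\bound(?X)$ and $\neg\bound(?X)$: the dichotomy ``$?X \in \dom(P)$ hence always true / $?X \notin \dom(P)$ hence always false'' is simply wrong, and you have given no rewrite for the mixed case. (The paper's proof sidesteps this by silently not treating $\bound$.) A secondary, cosmetic issue: the De~Morgan step $\neg(C_1 \,\&\&\, C_2) \mapsto (\neg C_1)\,||\,(\neg C_2)$ \emph{increases} the connective count, so your well-foundedness claim as stated is false; you should separate the argument into a first phase producing negation normal form and a second phase of structural induction on the NNF formula.
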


\begin{proof}
First, consider the conjunctive normal form of $C$, namely 
$\bigwedge_{j=1}^{m} D_j$, where each $D_j$ is a disjunction of 
equalities or negation of equalities in a set $E$.
Then\\
$\ev{P \FILTER C}{G} = \ev{(\cdots(P \FILTER D_1) \FILTER D_2)\cdots
  )\FILTER(D_m))}{G}$.

Thus we can asume that $C$ is a disjunction $D = (d_1 \vee \dots \vee
d_k)$ where each $d_j$ is an equality or a negation of an equality in $E$.
We will show the case $k=2$ and it is not difficult to see how to
generalize it. We have the logical equivalence:
\begin{equation}
\label{dd}
d_1 \vee d_2 \equiv  (d_1 \wedge  \neg d_2) \vee  (\neg d_1 \wedge  d_2) \vee (d_1
  \wedge d_2).
\end{equation}
We claim that with multiset semantics
\begin{multline}
\label{m}
\ev{P \FILTER (d_1 \vee d_2 )}{G} =  \ev{(P \FILTER (d_1 \wedge  \neg
  d_2)) \UNION  \\ (P \FILTER  (\neg d_1 \wedge  d_2)) 
\UNION  (P \FILTER    (d_1  \wedge d_2)  )   }{G}.
\end{multline}
Now, the crucial point is to observe that each mapping in
the left-hand side of (\ref{m})
satisfies {\em one and only one} of the term of the union in the right-hand side.
Hence the equivalence preserves multiplicity of mappings.

Now use again the fact that each 
$(P \FILTER  (d_i  \wedge d_j))$ is equivalent to 
 $(P \FILTER  (d_i)) \FILTER(d_j)$, and we get the desired pattern
 $P'$ having only atomic filter conditions.
\end{proof}

\begin{table}[t!]
\caption{
Transforming $\core$ graph patterns into Datalog Rules. 
The function $\delta(P)_G$ takes a graph pattern $P$ and an RDF graph
$G$, and returns a set of Datalog rules with main predicate
$p(\overline{\var}(P))$, where
$\overline{\var}(P)$ denotes the tuple of variables obtained 
from a lexicographical ordering of the variables in $P$.
If $L$ is a Datalog literal, then $\nu_j(L)$ denotes a copy of $L$ with its variables renamed according to a variable renaming function $\nu_j : V \to V$.
$comp$ is a literal encoding the notion of compatible mappings. 
$cond$ is a literal encoding the filter condition $C$.
$\overline{W}$ is a subset of $\overline{\var}(P_1)$.
}
\footnotesize
\centering
\begin{tabular}{|l|l|}
 \hline

 \rs Pattern $P$        
 & $\delta(P)_G$ \\ \hline \hline 

 \rs $(x_1,x_2,x_3)$    
 & $p( \overline{\var}(P) ) \gets triple(x_1,x_2,x_3)$  \\ \hline 

 \rs $(P_1 \AAND P_2)$   
 & $p( \overline{\var}(P)) \gets \nu_1(p_1(\overline{\var}(P_1)))~\land~\nu_2(p_2(\overline{\var}(P_2)))$ \\
 & $\hspace{4cm}\bigwedge_{x \in \var(P_1) \cap \var(P_2)} comp(\nu_1(x),\nu_2(x),x)$, \\
 & $\delta(P_1)_G$ , $\delta(P_2)_G$ \\  \cline{2-2}
 & \rs $\dom(\nu_1) = \dom(\nu_2) = \var(P_1) \cap \var(P_2)$, $\range(\nu_1) \cap \range(\nu_2) = \emptyset$. \\ 
 \hline

 \rs $(P_1 \UNION P_2)$ 
 & $p(\overline{\var}(P)) \gets p_1(\overline{\var}(P_1)) \bigwedge_{x \in \var(P_2) \setminus \var(P_1)}Null(x)$, 
 \\
 & \rs $p(\overline{\var}(P)) \gets p_2(\overline{\var}(P_2)) \bigwedge_{x \in \var(P_1) \setminus \var(P_2)}Null(x)$,
 \\ 
 & $\delta(P_1)_G$ , $\delta(P_2)_G$ \\
 \hline

 \rs $(P_1 \EXCEPT P_2)$ 
 & $p(\overline{\var}(P_1)) \gets p_1(\overline{\var}(P_1)) \land \neg p_2(\overline{\var}(P_2))$, \\
 & $\delta(P_1)_G$ , $\delta(P_2)_G$ \\
 \hline

\rs $(\SELECT W P_1)$ 
 & $p(\overline{W}) \gets p_1(\overline{\var}(P_1))$, \\
 & $\delta(P_1)_G$ \\
 \hline

 \rs $(P_1 \FILTER C)$ 
 & $p( \overline{\var}(P)) \gets p_1(\overline{\var}(P_1)) \land cond$ \\
 and \rs $C$ is atomic
 & $\delta(P_1)_G$ \\ 
 \hline  
\end{tabular}
\normalsize
\label{table:pattern2rules}
\end{table}

Thus we have that a SPARQL query $Q = (P,G)$ where
$P$ is a graph pattern and $G$ is an RDF graph gets
transformed into the Datalog query
$(\Pi,p(\overline{\var}(P)))$ where
$\Pi$ is the Datalog program $\delta(P)_G$ plus the facts
got from the transformation of the graph $G$,
and $p$ is the goal literal related to $P$.

\subsection{From Datalog to SPARQL}
\label{sec:datalog2sparql}
Now we need to transform Datalog facts into RDF data, Datalog substitutions into  SPARQL mappings, and Datalog queries into SPARQL queries.

\paragraph{Datalog facts as an RDF Graph:}
Given a Datalog fact $f = p(c_1,...,c_n)$, consider the function $\desc(f)$ which returns the set of triples

\hspace{1cm}$ \{ (u,\text{predicate},p),
(u,\text{rdf:\_1},c_1),\dots,(u,\text{rdf:\_n},c_n) \},$ \\
where $u$ is a fresh IRI.
Given a set of Datalog facts $F$, the RDF description of $F$ will be the graph $G = \bigcup_{f \in F} \desc(f)$.

\paragraph{Datalog rules as SPARQL graph patterns:}
Let $\Pi$ be a (normalized) Datalog program and $L$ be a literal $p(x_1,\dots,x_n)$ where $p$ is a predicate in $\Pi$ and each $x_i$ is a variable.
We define the function $\gp(L)_{\Pi}$ which returns a graph pattern encoding of the program $(\Pi, L)$.
 The translation works intuitively as follows: 
\begin{itemize}
\item[(a)] 
If predicate $p$ is extensional, then $\gp( L )_{\Pi}$ returns the graph pattern\\
$( (?Y, \text{predicate},p) \AAND ( ?Y, \text{rdf:\_1},x_1) \AAND \cdots \AAND (?Y, \text{rdf\_n}, x_n) )$,\\
where $?Y$ is a fresh variable. 

\item[(b)] 
If predicate $p$ is intensional and $\{r_1,\dots, r_n\}$ is the set of all the rules in $\Pi$ where $p$ occurs in the head, then $\gp(L)_{\Pi}$ returns the graph pattern 
$(\dots(  T(r_1) \UNION T(r_2) ) \dots \UNION T(r_n))$ where $T(r_i)$ is defined as follows (when $n = 1$ the resulting graph pattern is reduced to $T(r_1)$):
\begin{itemize}

\item If $r_i$ is $L \leftarrow L_1$ then $T(r_i)$ returns \\
$\SELECT x_1,\dots,x_n \WHERE \gp(L_1)_{\Pi}$.

\item If $r_i$ is $L \leftarrow L_1 \land EQ$, 
where $EQ$ is a set of equalities or negations of equalities, then 
$T(r_i)$ returns
$(\gp(L_1)_{\Pi} \FILTER C)$ where $C$ is a filter condition equivalent to $EQ$.

\item If $r_i$ is $L \leftarrow L_1 \land L_2$ then $T(r_i)$ returns
$(\gp(L_1)_{\Pi} \AAND \gp(L_2)_{\Pi})$.

\item If $r_i$ is $L \leftarrow L_1 \land \neg L_2$ then $T(r_i)$ returns\\
$(\gp(L_1)_{\Pi} \EXCEPT^* \gp(L_2)_{\Pi})$.
\end{itemize}
\end{itemize}

\paragraph{Datalog substitutions as SPARQL mappings:}
For each substitution $\theta$ satisfying $(\Pi,L)$ build a mapping $\mu$
satisfying that,  if $x / t \in \theta$ then $x \in \dom(\mu)$ and $\mu(x) = t$.
The correspondence of multiplicities work in a similar way (via
derivation tree to evaluation tree) as in the case of mappings to substitutions.

\smallskip

 Putting together the transformation in Table
 \ref{table:pattern2rules} and the
pattern obtained by using $\gp( L )_{\Pi}$, we get the following theorem, whose proof is a 
long but straightforward induction on the structure 
of the patterns in one direction, and on  the level of Datalog in
the other.

\begin{theorem}
\label{theorem:datalog2sparql}
Multiset nr-Datalog$^\neg$ has the same expressive power as $\core$.
\end{theorem}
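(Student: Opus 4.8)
The plan is to establish the two directions of the expressive-equivalence separately, using the two translations already set up in Section~\ref{sec:core=datalog}. For the direction ``$\core \subseteq$ Multiset nr-Datalog$^\neg$'', I would argue by structural induction on the graph pattern $P$, showing that $\delta(P)_G$ together with the facts encoding $G$ is a safe, non-recursive Datalog program whose answer multiset (counting derivation trees) is in bijection, cardinality-preserving, with $\ev{P}{G}$ under the mapping-to-substitution correspondence described after Table~\ref{table:pattern2rules}. The base case is the triple pattern, where a single rule $p(\overline{\var}(P)) \gets triple(x_1,x_2,x_3)$ produces exactly one derivation tree per matching triple, matching the clause $\ev{P}{G} = \{(\mu,1) \mid \dom(\mu)=\var(t),\ \mu(t)\in G\}$. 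For the inductive step, each row of Table~\ref{table:pattern2rules} must be checked against the corresponding SPARQL algebra operation: $\AAND$ against $\Join$ (the $comp$ literals implement $\sim$ and the product of cardinalities comes from the pairing of the two independent subderivations), $\UNION$ against $\cup$ (two rules, cardinalities add), $\EXCEPT$ against $\EXCEPT^*$ via the $\neg p_2$ body literal (note $\EXCEPT$ needs $\dom(P_1)=\dom(P_2)$, and Lemma~\ref{outer} lets us move between $\EXCEPT$ and $\EXCEPT^*$ if needed), $\FILTER$ against $\sigma$ (here Lemma~\ref{conditionC} is invoked to reduce to atomic conditions, which is why Table~\ref{table:pattern2rules} only lists the atomic case), and $\SELECT$ against $\pi$ (the projection rule $p(\overline W)\gets p_1(\overline{\var}(P_1))$ collapses substitutions agreeing on $\overline W$, summing cardinalities exactly as $\pi_W$ does). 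Safety and non-recursiveness are immediate: every rule has the shape of one of the normalized rules in Lemma~\ref{normalized}, and the dependency graph of $\delta(P)_G$ mirrors the (finite, acyclic) parse tree of $P$.

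For the converse direction, ``Multiset nr-Datalog$^\neg \subseteq \core$'', I would first normalize the given program using Lemma~\ref{normalized}, so that every rule is a projection, selection, join, or negation rule. Then, by induction on the level of the predicate $p$ in the dependency graph (the maximum length of a derivation chain to an extensional predicate), I would show that $\gp(L)_\Pi$ is a $\core$ pattern (after replacing $\EXCEPT^*$ by $\EXCEPT$, which is legitimate by Lemma~\ref{outer}, and noting $\EXCEPT$ is already in $\core$) whose evaluation over the RDF graph $G=\bigcup_{f\in\facts(\Pi)}\desc(f)$ corresponds, with multiplicities, to the answer multiset of $(\Pi,L)$. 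The base case is an extensional predicate: the pattern in clause~(a), a conjunction of triple patterns over the fresh reification variable $?Y$, retrieves exactly the encoded tuples, each with cardinality $1$ since $u$ is a fresh IRI per fact and hence $?Y$ is functionally determined. The inductive step handles an intensional $p$: the outer $\UNION$ over the rules $r_1,\dots,r_k$ defining $p$ matches the ``sum over derivation trees through any defining rule'' semantics (cardinalities add, exactly as $\cup$), and each $T(r_i)$ is handled by the matching algebra operation — $\SELECT$/$\pi$ for projection rules, $\FILTER$/$\sigma$ for selection rules (the filter condition $C$ equivalent to $EQ$ exists by the easy direction of the filter-condition translation), $\AAND$/$\Join$ for join rules, and $\EXCEPT^*$/$\setminus$ for negation rules (here the condition $\var(L_2)\subseteq\var(L_1)$ from the normalized negation rule is exactly what makes the domain restriction work out, so that $\mu_1\sim\mu_2$ on the shared variables coincides with the Datalog ``$\neg L_2$ fails'' test). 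The inductive hypothesis supplies the correspondence for $\gp(L_1)_\Pi$ and $\gp(L_2)_\Pi$ at lower level.

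The main obstacle, in both directions, is the bookkeeping of \emph{multiplicities} rather than mere set-membership — that is, verifying that the number of Datalog derivation trees equals the SPARQL algebra cardinality at every node. The subtle points are: (i) in the $\AAND$/$\jo$ case, ensuring that the renaming functions $\nu_1,\nu_2$ with disjoint ranges genuinely make the two subderivations independent, so that the cardinality is the \emph{product} $\cardi{\mu_1}{\Omega_1}\cdot\cardi{\mu_2}{\Omega_2}$ and not something smaller due to accidental variable sharing; (ii) in the $\UNION$ case, confirming that the $Null(x)$ padding literals for the non-shared variables do not introduce spurious extra derivation trees (they are facts, derivable in exactly one way); (iii) in the negation/$\EXCEPT$ case, that the $\neg p_2(\dots)$ body literal contributes no branching to the derivation tree, so cardinality is inherited unchanged from $p_1$, matching $\cardi{\mu_1}{\Omega_1\setminus_F\Omega_2}=\cardi{\mu_1}{\Omega_1}$; and (iv) handling the unbound-versus-$null$ mismatch cleanly — a SPARQL mapping with a variable outside its domain corresponds to a substitution sending that variable to $null$, and one must check this translation is consistent through projection and union where domains change. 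None of these is deep, but each requires a careful paragraph, which is why the theorem is stated with the remark that the proof is ``a long but straightforward induction.''
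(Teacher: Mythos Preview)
Your proposal is correct and follows essentially the same approach as the paper: the paper states the theorem immediately after setting up the two translations $\delta(P)_G$ (Table~\ref{table:pattern2rules}) and $\gp(L)_\Pi$, and remarks that the proof is ``a long but straightforward induction on the structure of the patterns in one direction, and on the level of Datalog in the other,'' which is exactly the two-direction structure you spell out. Your explicit identification of the multiplicity bookkeeping subtleties (product in $\AAND$ via the disjoint renamings $\nu_1,\nu_2$; no spurious branching from $Null$ facts or negative literals; the unbound-versus-$null$ correspondence) and your invocation of Lemmas~\ref{outer}, \ref{normalized}, and~\ref{conditionC} at the right places are all in line with, and in fact more detailed than, what the paper itself provides.
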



\section{The relational version of Multiset Datalog: MRA}

In this section we introduce a multiset relational algebra (called
MRA), counterpart of  Multiset Datalog, and prove its equivalence with
the fragment of non-recursive Datalog with safe negation.

\subsection{Multiset Relational Algebra (MRA)}

Multiset relational algebra is an extension of classical relation
algebra having multisets of relations instead of sets of relations.
As indicated in the introduction, there are manifold approaches and
operators to extend set relational algebra with multisets.
We use the semantics of multiset operators defined 
by Dayal et al.~\cite{91030} for the operations
of  selection, projection, natural join and  arithmetic union;
and add filter difference (not present there) represented by
the operator ``except''.

 Let us formalize these notions.
In classical {\em (Set) relational algebra}, a database schema is a set of 
relational schemas.
A relational schema is defined as a set of attributes.
Each attribute $A$ has a domain, denoted $\dom(A)$.
A {\em relation} $R$ over the relational schema $S = \{A_1,\dots, A_n\}$
is a finite {\em set} of tuples.
 An instance $r$ of a schema $S$ is a relation over $S$.
Given an instance $r$ of a relation $R$ with schema $S$, 
$A_j \in S$ and $t =(a_1,\dots,a_n) \in r$, we denote by $t[A_j]$ the 
tuple $(a_j)$. Similarly with $t[X]$ when $X \subseteq S$ and 
we will define $t[\emptyset] = \emptyset$.

In the {\em Multiset relational algebra} setting, an instance of a schema is
a {\em multiset relation}, that is,
a set of pairs $(t,i)$, where $t$ is a tuple over the schema $S$,
and $i\geq 1$ is a positive integer. 
(For notions and notations on multisets recall section
\ref{p-multisets}, {\em Multisets}).

\begin{definition}[Multiset Relational Algebra (MRA)]
Let $r$ and $r'$ be multiset relations over the schemas $S$ and $S'$
respectively.
Let $A,B \in S$ be  attributes, $a \in \dom(A)$ and $I = S \cap S'$.
MRA consists of the following operations:

\begin{enumerate}
\item Selection. $\sigma_{C}(r) = \{ (t,i): (t,i) \in r \wedge t[A]=a
  \}$, where $C$ is a Boolean combination of terms of the form
 $A = B$ or $A=a$

\item Natural Join. $r \bowtie r'$ is a multiset relation over $S \cup S'$ defined as follows.
Let $S'' = S' - S$.
 Let  ${t}^{\smallfrown}t'$ denotes concatenation of tuples. Then
\[
r \bowtie r' = \{  (t^{\smallfrown}(t'[S'']), i \times j):  (t,i) \in r \wedge (t',j)\in r'  \wedge t[I] = t'[I]      \}.
\]

\item Projection. Let $X \subseteq S$. Then:
\[
 \pi_{X}(r) = \{  (t[X], \mbox{$\sum_{(t_j,n_j)\in r \mbox{ s.t. } t_j[X] = t}$}\; n_j ) : (t,*)\in r \}.
\]

\item Union. Assume $S = S'$.  
\begin{align*}
  r \uplus r' =& \{ (t,i):  \text{ $t$ $i$-belongs to $r$ and $t \notin r'$ } \} \\
            &  \cup  \{ (t',j):  \text{  $t' \notin r$ and $t'$ $j$-belongs to $r'$ } \} \\
            & \cup \{ (t, i+j): \text{ $t$ $i$-belongs to $r$ and $t$ $j$-belongs to $r'$ } \}.
\end{align*}

\item Except.  Assume $S = S'$. 
\[
  r \setminus r' = \{  (t,i) \in r :  (t,*) \notin r' \}.
\]
 \end{enumerate}
 As usual, we will define a {\em query} in this multiset relational algebra as
an expression over an extended domain which includes, besides the
original domains of the schemas, a set of variables $V$.
\end{definition}

\subsection{MRA $\equiv$ Multiset nr-Datalog$^\neg$}
This subsection is devoted to prove the following result.

\begin{theorem}
Multiset relational algebra ({\bf MRA}) has the same expressive power 
as Multiset Non-recursive Datalog with safe negation.
\end{theorem}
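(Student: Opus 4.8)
The plan is to prove the two inclusions separately, exhibiting explicit translations in both directions, and then to invoke Theorem~\ref{theorem:datalog2sparql} only implicitly: since both MRA and $\core$ are being related to the same Multiset nr-Datalog$^\neg$, it suffices to give direct simulations between MRA and Multiset Datalog. The natural route is to work with the normalized form of Datalog programs provided by Lemma~\ref{normalized}, so that on the Datalog side we only have to handle four rule shapes (projection, selection, join, negation), each of which has an obvious MRA counterpart ($\pi$, $\sigma$, $\bowtie$, $\setminus$). The correspondence is essentially the standard one between relational algebra and nr-Datalog, transplanted to the multiset setting, where the key accounting principle is ``number of derivation trees = multiplicity'', exactly as used in Section~\ref{sec:core=datalog}.

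First I would treat the direction \emph{Datalog $\to$ MRA}. Given a normalized program $(\Pi,L)$, associate to each intensional predicate $p$ a multiset relation $[p]$ whose schema is the tuple of argument positions of $p$, defined by induction on the (acyclic) dependency order of $\Pi$. Extensional predicates are interpreted directly by the input multiset relations (here one must fix how the EDB facts, which are sets, give rise to multiplicity-one relations). For each rule with head $p$ one builds an MRA expression: a projection rule $L \gets L_1$ becomes $\pi_X$ applied to $[L_1]$ with $X$ the head variables; a selection rule $L \gets L_1, EQ$ becomes $\sigma_C$ where $C$ is the Boolean combination of (in)equalities encoded by $EQ$; a join rule $L \gets L_1, L_2$ becomes a natural join $[L_1] \bowtie [L_2]$ after renaming attributes so that shared variables become shared attributes and non-shared variables become distinct attributes; and a negation rule $L \gets L_1, \neg L_2$ with $\var(L_2)\subseteq\var(L_1)=\var(L)$ becomes the filter difference $[L_1] \setminus [L_2']$ where $L_2'$ is projected/renamed onto the attributes of $L_1$. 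When several rules share the head $p$, combine their MRA expressions with arithmetic union $\uplus$. The multiplicity bookkeeping then follows rule by rule: the cardinality clauses in the MRA definitions ($\sum$ for $\pi$, $\times$ for $\bowtie$, $+$ for $\uplus$, unchanged for $\sigma$ and $\setminus$) are precisely the arithmetic of counting derivation trees under Mumick et al.'s semantics, so an easy induction shows $\cardi{t}{[p]}$ equals the number of derivation trees of $p(t)$ in $\Pi$.

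For the converse, MRA $\to$ Datalog, proceed by structural induction on an MRA query expression $E$ over the variable-extended domain. Introduce one fresh intensional predicate $p_E$ per subexpression, with arity the number of attributes in the schema of $E$. A base relation translates to an extensional predicate (or to a fact set). Then: $\sigma_C(E')$ yields $p_E(\bar x) \gets p_{E'}(\bar x), EQ_C$ after rewriting $C$ into a conjunction of equality/inequality atoms (splitting disjunctions across several rules, as in the proof of Lemma~\ref{conditionC}, taking care to keep the disjuncts mutually exclusive so multiplicities are preserved); $\pi_X(E')$ yields $p_E(\bar x_X) \gets p_{E'}(\bar x)$; $E_1 \bowtie E_2$ yields $p_E(\bar x) \gets p_{E_1}(\bar x_1), p_{E_2}(\bar x_2)$ with variables identified on the common attributes; $E_1 \uplus E_2$ yields the two rules $p_E(\bar x)\gets p_{E_1}(\bar x)$ and $p_E(\bar x)\gets p_{E_2}(\bar x)$; and $E_1 \setminus E_2$ yields $p_E(\bar x) \gets p_{E_1}(\bar x), \neg p_{E_2}(\bar x)$. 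Safety and non-recursiveness of the resulting program are immediate (every head variable appears in the single positive body predicate, and the dependency graph mirrors the expression tree). Again a routine induction on the structure of $E$ transfers the multiplicities, using the same four arithmetic identities read in the opposite direction.

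The main obstacle, and the place where the argument needs genuine care rather than routine checking, is the $\uplus$/$\setminus$ multiplicity matching at the boundary between Datalog's ``count the derivation trees'' semantics and MRA's explicit cardinality formulas. For $\uplus$ this means checking that the Datalog encoding by two rules with the same head really adds cardinalities (no spurious identification or merging of derivations), and in particular that the translation of a disjunctive selection condition does not double-count tuples satisfying several disjuncts --- this is exactly why one must decompose $d_1 \vee d_2$ into the mutually exclusive cases as in~(\ref{m}). For $\setminus$ one must confirm that negation-as-failure in a safe non-recursive program contributes a factor of $1$ (the negated subgoal is merely a test), matching the clause $\cardi{t}{r \setminus r'} = \cardi{t}{r}$, and that this is consistent with the domain restriction $S = S'$ built into the MRA \emph{except} operator versus the unrestricted negation rule of Lemma~\ref{normalized} (a projection/padding step reconciles the two, just as $\EXCEPT$ and $\EXCEPT^*$ were reconciled in Lemma~\ref{outer}). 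Once these cardinality identities are pinned down, the rest is a mechanical double induction and the theorem follows.
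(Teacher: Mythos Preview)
Your proposal is correct and follows essentially the same route as the paper: structural induction in both directions, using the normalized Datalog form of Lemma~\ref{normalized} on the Datalog$\to$MRA side, a fresh intensional predicate per subexpression on the MRA$\to$Datalog side, and the ``number of derivation trees $=$ multiplicity'' accounting throughout. The paper also invokes the mutually-exclusive disjunction trick of Lemma~\ref{conditionC} to handle general selection conditions, just as you do.

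One point to sharpen. For the negation rule $q(X,Y) \leftarrow p_1(X,Y), \neg p_2(Y)$ in the Datalog$\to$MRA direction, you write that it becomes $[L_1] \setminus [L_2']$ with $L_2'$ ``projected/renamed onto the attributes of $L_1$'', and you appeal to Lemma~\ref{outer} for the schema reconciliation. This is not quite the right mechanism: projection cannot \emph{add} attributes, and Lemma~\ref{outer} relies on SPARQL's bound/unbound machinery (the $\text{NoneBound}/\text{SomeBound}$ filters), which has no counterpart in pure MRA. The paper's device is instead
\[
(q_r)^R \;=\; (p_1)^R \setminus \bigl((p_1)^R \bowtie (p_2)^R\bigr):
\]
joining $p_1$ with $p_2$ yields precisely those tuples over $p_1$'s schema whose $Y$-part appears in $p_2$, so both sides of $\setminus$ now share the schema $S$ and the except operator applies directly. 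The multiplicity check is then immediate, since the join side is used only as a filter (membership, not count). With this correction in place, your argument and the paper's coincide.
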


From this theorem and Theorem \ref{theorem:datalog2sparql} it follows:

\begin{corollary}
$\core$ is equivalent to MRA.
\end{corollary}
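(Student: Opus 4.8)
The plan is to establish the two inclusions separately, by giving explicit translations between MRA queries and normalized nr-Datalog$^\neg$ programs, and arguing that each translation preserves multiplicities derivation-tree by derivation-tree. The normalized form of Lemma~\ref{normalized} is the key organizing tool: since every program is equivalent to one built only from projection, selection, join, and negation rules, it suffices to match each of those four rule types with the corresponding MRA operator (plus union, which arises from having several rules with the same head predicate). This mirrors the structure already used in Section~\ref{sec:core=datalog}, so the argument is essentially a re-packaging of Theorem~\ref{theorem:datalog2sparql} with relations in place of solution mappings.

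First I would treat the direction \emph{MRA $\to$ Multiset nr-Datalog$^\neg$}. Given an MRA query, I proceed by induction on the structure of the algebra expression. Each base relation $R$ over schema $S=\{A_1,\dots,A_n\}$ becomes an extensional predicate $r(A_1,\dots,A_n)$, with one fact per tuple; the multiplicity of a tuple is encoded exactly as in the multiset-Datalog convention of counting derivation trees (for an extensional fact with multiplicity $i$ one records $i$ copies, or equivalently treats repeated facts as distinct derivations, as in Mumick et al.~\cite{90820}). For the inductive step, if a subexpression $e$ has been translated to a predicate $p_e(\overline{X})$ where $\overline{X}$ lists the attributes of its schema, then: $\sigma_C(e)$ becomes a selection rule $p(\overline X)\gets p_e(\overline X), C$ (using Lemma~\ref{conditionC}-style handling to reduce a Boolean $C$ to atomic equalities/inequalities, distributing over unions as needed to keep multiplicities right); $\pi_X(e)$ becomes a projection rule $p(\overline X)\gets p_e(\overline{X'})$ with $X\subset X'$; $e\bowtie e'$ becomes a join rule $p(\overline{X\cup Y})\gets p_e(\overline X), p_{e'}(\overline Y)$, where the shared attributes in $\overline X$ and $\overline Y$ are literally the same variables, so natural-join equality is enforced by unification; $e\uplus e'$ (with $S=S'$) becomes two rules $p(\overline X)\gets p_e(\overline X)$ and $p(\overline X)\gets p_{e'}(\overline X)$; and $e\setminus e'$ becomes a negation rule $p(\overline X)\gets p_e(\overline X),\neg p_{e'}(\overline X)$. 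One checks that each construction preserves multiplicity: a derivation tree for $p$ decomposes uniquely into subtrees for the premises, and the cardinality recurrences of MRA (sum for projection and union, product for join, pass-through for selection and except) match exactly the combinatorics of counting such trees, just as in the SPARQL case.

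For the reverse direction \emph{Multiset nr-Datalog$^\neg$ $\to$ MRA}, I first invoke Lemma~\ref{normalized} to put the program in normalized form, then translate by induction on the stratification (acyclic dependency) order. An extensional predicate $p(A_1,\dots,A_n)$ maps to a base multiset relation over $\{A_1,\dots,A_n\}$. For an intensional predicate $q$ defined by rules $r_1,\dots,r_k$, I translate each $r_j$ to an MRA expression $e_j$ and output $e_1\uplus\cdots\uplus e_k$ (so multiple rules become arithmetic union, matching the ``$L\gets L_1$ / $L\gets L_2$'' rows of Table~\ref{table:equivalences}). A projection rule $q(\overline X)\gets L_1$ translates to $\pi_{\overline X}(\,[\![L_1]\!]\,)$; a selection rule $q(\overline X)\gets L_1, EQ$ to $\sigma_{EQ}(\,[\![L_1]\!]\,)$, after renaming attributes so that the schema matches; a join rule $q\gets L_1,L_2$ to $[\![L_1]\!]\bowtie[\![L_2]\!]$ with attributes named so that shared variables become shared attributes (this is where the bookkeeping of attribute names versus Datalog variable names has to be handled carefully — an attribute-renaming step is needed so that $\bowtie$ equates exactly the right columns and no others); and a negation rule $q(\overline X)\gets L_1,\neg L_2$, where $\var(L_2)\subseteq\var(L_1)$ and $\var(q)=\var(L_1)$, to $[\![L_1]\!]\setminus\pi_{\var(L_2)}\!\big(\text{padding of }[\![L_2]\!]\big)$ — or more directly, since the normalized negation rule already forces the variable-inclusion condition, to an ``except'' against the projection/selection of $[\![L_2]\!]$ onto the shared columns, exploiting the definition $r\setminus r' = \{(t,i)\in r : (t,*)\notin r'\}$ which is precisely the filter-difference semantics of the negation rule. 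Again one verifies multiplicity preservation rule by rule.

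The corollary then follows immediately: the theorem just proved gives $\text{MRA}\equiv\text{Multiset nr-Datalog}^\neg$, and Theorem~\ref{theorem:datalog2sparql} gives $\core\equiv\text{Multiset nr-Datalog}^\neg$, so $\core\equiv\text{MRA}$ by transitivity. The main obstacle I anticipate is not conceptual but bookkeeping: reconciling MRA's named-attribute, fixed-schema discipline with Datalog's positional-variable discipline, in particular ensuring that natural join equates exactly the intended columns and that the $S=S'$ side conditions on $\uplus$ and $\setminus$ are met after the various projections and renamings. This requires, at each inductive step, a careful (but routine) attribute-renaming lemma — analogous to the variable-renaming functions $\nu_j$ in Table~\ref{table:pattern2rules} — and a check that such renamings never alter tuple multiplicities. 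Handling general Boolean filter conditions is the other place where care is needed, but this is already dispatched by the argument of Lemma~\ref{conditionC}, which transfers verbatim to multiset relations.
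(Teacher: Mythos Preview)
Your proposal is correct and follows essentially the same route as the paper: prove MRA $\equiv$ Multiset nr-Datalog$^\neg$ by mutual inductive translations (using the normalized form of Lemma~\ref{normalized} for the Datalog-to-MRA direction), then invoke Theorem~\ref{theorem:datalog2sparql} to conclude $\core \equiv$ MRA by transitivity. The case analysis and the multiplicity-preservation arguments match the paper's almost rule for rule.

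The one place where your write-up is looser than the paper is the translation of the negation rule $q(X,Y)\gets p_1(X,Y),\neg p_2(Y)$. Your ``padding of $[\![L_2]\!]$'' and ``except against the projection of $[\![L_2]\!]$ onto the shared columns'' do not actually produce a relation with the same schema as $[\![L_1]\!]$, which the MRA $\setminus$ operator requires. The paper resolves this cleanly with a self-join: it translates the rule as $(p_1)^R \setminus \big((p_1)^R \bowtie (p_2)^R\big)$. The inner join yields a relation over $\var(L_1)$ (since $\var(L_2)\subseteq\var(L_1)$) containing exactly those tuples of $p_1$ whose $Y$-part occurs in $p_2$; the outer $\setminus$ then filters them out while preserving the $p_1$-multiplicities, because $\setminus$ ignores the multiplicity on the right. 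This is the concrete construction your ``padding'' was gesturing at.
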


\begin{proof}[of the Theorem]
The proof is based on the ideas of the proof of 
Theorem 3.18 in \cite{70034}, extended to multisets.

Let $E$ be a relational algebra query expression over the schema $R$ 
and $D$ a database. We may assume, using similar arguments as in
Lemma \ref{conditionC}, that the condition $C$ in in the select operator
is a conjunctions of equalities and inequalities of terms.
Then it will be translated by a function $(\cdot)^{\Pi}$
to the Datalog program $(\facts(\Pi) \cup E^{\Pi}, out_E)$,
where $\facts(\Pi)$ is the multiset of facts (over fresh predicates
$r^{\Pi}$ for each relation $r$, and having the same arity as the original schema of $r$):\\
$\facts(\Pi) = \{ (r^{\Pi}(t),n) : \text{ $t$ is a tuple with multiplicity $n$ in schema
  $r$ in $D$ } \}$,

\noindent
and $(E^{\Pi},out_E)$ is the datalog program produced by the
translation of the expression $E$ given by the recursive specification  below.
For the expression $E_j$, the set  $V_j$ will denote its list of 
attributes.

\begin{enumerate} 

\item  Base case. No operator involved. Thus the query is a member of the
schema $R$, namely $r(x_1,\dots,x_n)$. The corresponding Multiset Datalog
query is:   

$out_r(x_1,\dots,x_n) \leftarrow r^{\Pi}(x_1,\dots, x_n)$

\item $E = \sigma_C(E_1)$, where $C$ is a conjunction of equalities 
and inequalities of terms.
The  translation is the program $(E^{\Pi}, out_E)$ where $E^{\Pi}$ is the 
program: $E_1^{\Pi} \cup \{R\}$ where $R$ is the rule: 

$out_E(x_1,\dots,x_k) \leftarrow out_{E_1}(x_1,\dots,x_k) \land C$.

\item  $E = E_1 \bowtie E_2$. Let $V = V_2 \setminus V_1$. 
The  translation is the program $(E^{\Pi}, out_E)$ where $E^{\Pi}$ is the 
program: $E_1^{\Pi} \cup E_2^{\Pi}  \cup \{R\}$ where $R$ is the rule: 

$out_E(V_1,V) \leftarrow out_{E_1}(V_1) \land  out_{E_2}(V_2)$.

\item  $E = \pi_{A}(E_1)$, where $A$ is a sublist of  the attributes
  in $E_1$.
The  translation is the program $(E^{\Pi}, out_E)$ where $E^{\Pi}$ is the 
program: $E_1^{\Pi} \cup \{R\}$ where $R$ is the rule: 

$out_E(A) \leftarrow out_{E_1}(V_1)$.

\item $E = E_1 \cup E_2$, where $E_1$ and $E_2$ have the same schema.
The  translation is the program $(E^{\Pi}, out_E)$ where $E^{\Pi}$ is the 
program: $E_1^{\Pi} \cup E_2^{\Pi}  \cup \{R1, R_2\}$ where $R1, R_2$
are  the rules: 

$out_E(x_1,\dots,x_k) \leftarrow out_{E_1}(x_1,\dots,x_k)$,

$out_E(x_1,\dots,x_k) \leftarrow out_{E_2}(x_1,\dots,x_k)$.

\item $E = E_1 \setminus E_2$,
  where  $E_1$ and $E_2$ have the same schema.
The  translation is the program $(E^{\Pi}, out_E)$ where $E^{\Pi}$ is the 
program: $E_1^{\Pi} \cup E_2^{\Pi}  \cup \{R1, R_2\}$ where $R1, R_2$
are  the rules: 

$out_E(x_1,\dots,x_k) \leftarrow out_{E_1}(x_1,\dots,x_k) \land \neg  out_{E_2}(x_1,\dots,x_k)$.
\end{enumerate}

It is important to check that the resulting program is non-recursive
(this is because the structure of the algebraic relational expression from
where it comes is a tree). Also it is safe because in rule (6) both
expressions have the same schema).
Now, it needs to be shown that 
for each relational expression (query) 
$E$ in $R$, $[E]_D$ and $[E^{\Pi}]$ return the 
same ``tuples'' with the same multiplicity.
 This is done by induction on the structure of $E$.

\smallskip

 Now, let us present the transformation from Multiset Datalog to 
Multiset Relational Algebra.
Note that we may assume a normal form for the Datalog programs as
shown in Lemma \ref{normalized}.
Then the recursive translation $(\cdot)^R$ from 
Datalog programs to MRA expressions  goes as follows.

\begin{enumerate}

\item First translate those head predicates $q$ occurring 
in $\geq 2$ rules as follows. 
 Let $q$ be the head of the rules $r_1,\dots, r_k$, $k \geq 2$. 
Rename each such head $q$ with the same set of variables $V$.
Then the translation is $(q)^R = (q_{r_1})^R \cup \cdots \cup
(q_{r_k})^R$.

\smallskip

From now on, we can assume that, not considering these $q$'s,
all other predicates occur as head in at most one rule. Hence we will
not need the subindex indicating the rule to which they belong to.

\item (Base case.) Let $r$ be a fact $q(V).$
 Then translates it
as  $(q_r)^R = q^R(V)$, where $q^R$ is a fresh new schema with the 
corresponding arity.

\item 
Let $r$ be
$q(A) \leftarrow p(V)$, where $A$ is a sublist of $V$.
The translation is  
  $(q_r)^R = \pi_{A}((p)^R)$.

\item
Let $r$ be
$q(V) \leftarrow p(V) \land C$,
where $C$ is a conjunction of equalities or inequalites of terms.
The translation is $(q_r)^R = \sigma_C((p)^R)$.

\item Let $r$ be
 $q(X,Y,Z) 
       \leftarrow p_1(X,Y) \land p_2(Y,Z)$,
where $X,Y,Z$ are disjoint lists of variables.
The translation is $(q_r)^R = (p_1)^R \bowtie (p_2)^R$.

\item 
Let $r$ be
$q(X,Y) \leftarrow p_1(X,Y) \land \neg  p_2(Y)$, that is the rule is safe.
The translation is 
$(q_r)^R = (p_1)^R \setminus   ((p_1)^R \bowtie  (p_2)^R)$.

\end{enumerate}

The arguments about multiplicity are straightforward verifications.
And because the program $\Pi$ is non-recursive (i.e. its dependency graph is acyclic), the recursive translation to the  relational
expression gives a well formed algebraic expression.
\end{proof}


\section{Related Work and Conclusions}

To the best of our knowledge, 
the multiset semantics of SPARQL has not been systematically addressed.
There are works that, when studying the expressive power of SPARQL,
 touched some aspects of this topic.
Cyganiak \cite{10140} was among the first who
 gave a translation of a core fragment of SPARQL into
relational algebra.  
Polleres \cite{10150} proved the inclusion of the fragment of
SPARQL patterns with safe filters into Datalog by giving a precise and
correct set of rules. 
Schenk \cite{10158} proposed a formal semantics for SPARQL based on 
Datalog, but concentrated on complexity more than expressiveness
issues. Both, Polleres and Schenk do not consider multiset semantics of SPARQL
in their translations.
   Perez et al. \cite{10145} gave the first formal
treatment of multiset semantics for SPARQL.
Angles and Gutierrez \cite{90006}, Polleres \cite{90821} and 
Schmidt et al. \cite{90380} extended the set semantics
to multiset semantics using this idea. 
 Kaminski et al \cite{Kaminski} considered
multisets in subqueries and aggregates in SPARQL. 
In none of these works was addressed the goal of characterizing the
multiset algebraic and/or logical structure of the operators in SPARQL.

\medskip

We studied the multiset semantics of the core SPARQL patterns, in
order to shed light on the algebraic and logic structure of them.
In this regard, the discovery that the core fragment of SPARQL
patterns matches precisely the multiset semantics of Datalog as defined by
Mumick et al. \cite{90820} and that this logical structure corresponds
to a simple multiset algebra, namely the Multiset Relational Algebra
(MRA), builds a nice parallel to that of classical set relational
algebra and relational calculus. Contrary to the rather chaotic
variety of multiset operators in SQL, it is interesting to observe that in
SPARQL there is a coherent body of multiset operators. We think
that this should be considered by designers in order to try to keep
this clean design in future extensions of SPARQL.

Last, but not least, this study shows the complexities and challenges
that the  introduction of multisets brings to query languages,
exemplified here in the case of SPARQL.

\paragraph{Acknowledgments.} The authors have funding from Millennium
Nucleus Center for Semantic Web Research under Grant NC120004. The
authors thank useful feedback from O. Hartig and anonymous reviewers.

\bibliographystyle{abbrv}
\bibliography{referencias}
\end{document}